\documentclass{article}[12pt]
\usepackage[utf8]{inputenc}
\usepackage{amsmath, amssymb, amsthm}
\usepackage{mathtools}
\usepackage{amsfonts}
\usepackage{commath}
\bibliographystyle{plainurl}
\usepackage[margin=1.15in]{geometry} 
\parskip=1ex
\title{\textbf{Streaming Hardness of Unique Games}}

%\author{Venkatesan Guruswami}{Computer Science Department, Carnegie Mellon University\\{5000 Forbes Ave, Pittsburgh, PA, USA, 15213}}{venkatg@cs.cmu.edu}{0000-0001-7926-3396}{Research supported in part by NSF grants CCF-1422045 and CCF-1526092.}

%\author{Runzhou Tao}{Institute for Interdisciplinary Information Sciences \\ Tsinghua University \\ Beijing, China 100084}{trz15@mails.tsinghua.edu.cn}{}{Most of this work was done during a visit by the author to Carnegie Mellon University.}
\author{Venkatesan Guruswami\thanks{Computer Science Department, Carnegie Mellon University, 5000 Forbes Ave, Pittsburgh, PA, USA, 15213. Email: \texttt{venkatg@cs.cmu.edu}. Research supported in part by NSF grants CCF-1422045 and CCF-1526092.} 
\and Runzhou Tao\thanks{Institute for Interdisciplinary Information Sciences, Tsinghua University, Beijing, China 100084. Email: \texttt{trz15@mails.tsinghua.edu.cn}. Most of this work was done during a visit by the author to Carnegie Mellon University.}}
\date{}

\newtheorem{theorem}{Theorem}[section]

\newtheorem{lemma}[theorem]{Lemma}
\newtheorem{definition}[theorem]{Definition}

%\authorrunning{V. Guruswami and R. Tao}%mandatory. First: Use abbreviated first/middle names. Second (only in severe cases): Use first author plus 'et. al.'

%\Copyright{Venkatesan Guruswami and Runzhou Tao}%mandatory, please use full first names. LIPIcs license is "CC-BY";  http://creativecommons.org/licenses/by/3.0/

%\subjclass{Theory of computation---Communication complexity; Theory of computation---Streaming models}% mandatory: Please choose ACM 2012 classifications from https://www.acm.org/publications/class-2012 or https://dl.acm.org/ccs/ccs_flat.cfm . E.g., cite as "General and reference $\rightarrow$ General literature" or \ccsdesc[100]{General and reference~General literature}. 

%\keywords{Communication complexity, CSP, Fourier Analysis, Lower bounds, Streaming algorithms, Unique Games}

%\category{}

%Editor-only macros:: begin (do not touch as author)%%%%%%%%%%%%%%%%%%%%%%%%%%%%%%%%%%
%\EventEditors{}
%\EventNoEds{2}
%\EventLongTitle{36th International Symposium on Theoretical Aspects of Computer Science (STACS 2019)}
%\EventShortTitle{STACS}
%\EventAcronym{STACS}
%\EventYear{2019}
%\EventDate{March 13--16, 2019}
%\EventLocation{Berlin, Germany}
%\EventLogo{}
%\SeriesVolume{}
%\ArticleNo{}
%\nolinenumbers %uncomment to disable line numbering
%\hideLIPIcs  %uncomment to remove references to LIPIcs series (logo, DOI, ...), e.g. when preparing a pre-final version to be uploaded to arXiv or another public repository
%%%%%%%%%%%%%%%%%%%%%%%%%%%%%%%%%%%%%%%%%%%%%%%%%%%%%%

%\setcounter{page}{0}

\begin{document}

\maketitle
\thispagestyle{empty}

\begin{abstract}
    We study the problem of approximating the value of a Unique Game instance in the streaming model. A simple count of the number of constraints divided by $p$, the alphabet size of the Unique Game, gives a trivial $p$-approximation that can be computed in $O(\log n)$ space. Meanwhile, with high probability, a sample of $\tilde{O}(n)$ constraints suffices to estimate the optimal value to $(1+\epsilon)$ accuracy. We prove that any single-pass streaming algorithm that achieves a $(p-\epsilon)$-approximation requires $\Omega_\epsilon(\sqrt{n})$ space. Our proof is via a reduction from lower bounds for a communication problem that is a $p$-ary variant of the Boolean Hidden Matching problem studied in the literature. 
    Given the utility of Unique Games as a starting point for reduction to other optimization problems, our strong hardness for approximating Unique Games could lead to down\emph{stream} hardness results for streaming approximability for other CSP-like problems.
  
\end{abstract}

\section{Introduction}

The Unique Games (UG) problem is a type of constraint satisfaction problem on a graph. Given an alphabet $[p]=\{0,1,\dots,p-1\}$ and a graph $G = (V, E)$, we need to find a label assignment $x: V \to [p]$. The constraint on an edge $(u,v) \in E$ is specified described by a permutation $\pi_{uv}: [p] \to [p]$ and we want to find the assignment to maximize the number of equations $\pi_{uv}(x_u) = x_v$ that are satisfied. This maximum possible value over all possible assignments is called the optimal value of the UG instance.  Simply picking a random assignments satisfies a fraction $1/p$ of the constraints in expectation, giving a trivial factor $p$ approximation algorithm to the optimal value of any instance. More sophisticated algorithms based on semidefinite programming give better approximation guarantees~\cite{CMM}, but even on almost-satisfiable instances where the optimal value is a $(1-\epsilon)$ fraction of the total number of constraints, the algorithm satisfies only a fraction $\approx p^{-\epsilon/2}$ of the constraints. Under Khot's celebrated Unique Games conjecture~\cite{Khot-UGC}, this guarantee cannot be improved~\cite{KKMO}, and the conjecture further implies optimal hardness results for a host of problems. 
In terms of proven hardness results (under say the standard assumption that $\mathrm{P} \neq \mathrm{NP}$), we know that Unique Games does not admit any constant factor approximations~\cite{feige-reichman}, and in an exciting recent line of work this was also established on instances that have optimum value close to a fraction $1/2$~\cite{DKKMS-stoc18,KMS-focs18}.

To shed further light on the (difficulty of the) Unique Games problem from a different angle, in this work, we consider the Unique Games problem in the streaming model of computation. The constraints are assumed to arrive one-by-one in a single pass. The algorithm is only given a limited amount of memory, so cannot store the entire instance as it passes by. The goal of the algorithm is to estimate the optimal value of the Unique Games instance. That is, it must output a value $T$ which is a lower bound on the optimum number of constraints that can be satisfied, and which is at most an approximation factor $f$ from the optimum. 
In recent years, numerous algorithms and hardness for problems in the streaming model have been developed, and this work address the important Unique Games problem from the streaming perspective.

\iffalse
random guessing can give a $p$-approximation. However, it seems that better approximation cannot achieved in polynomial time. In fact, the famous Unique Games Conjecture, which states that it is NP-hard to distinguish a unique game instance with at least $1-\epsilon$-fraction of constraints satisfied for some assignment from one with at most $\epsilon$-fraction satisfied for every assignment when $p$ is large enough, implies hardness of approximation for a wide range of problems.

% what is streaming?
While it is indeed important to consider the approximability in polynomial time for unique games, in many cases the space complexity is crucial and natural. The streaming model is often considered in these cases, where the algorithm can only access a limited size of memory and a single-pass stream of input. In recent years, a lot of algorithms and hardness of problems in the streaming model is developed, including CSP problem. 
\fi

The simple-minded algorithm which simply counts the number of constraints and outputs a $1/p$ fraction of it as a valid estimate for every instance (by virtue of the random assignment algorithm), and delivers a factor $p$ approximation. This algorithm can obviously be implemented in the streaming model using $O(\log n)$ space. Meanwhile, if we are given $\tilde{O}(n)$ space, we can sample a random $\tilde{O}(n)$-size subset of constraints and the answer of sampled unique game gives us an arbitrarily close approximation for the original stream.\footnote{Note that We do not place any computational restriction on the algorithm, only on the amount of space it may use. Also, since we are talking about sub-linear space, we do not focus on finding an approximate solution, but only outputting an estimate of the optimal value. Since our focus in on lower bounds, this only makes our technical result stronger.}
A natural question which arises, and which motivates this work, is thus: \emph{can we do better than the trivial factor $p$ approximation in polylogarithmic space?}

In a beautiful work, Kapralov, Khanna, and Sudan~\cite{kapralov2015streaming} showed that the problem of Max-CUT, which is a special case of the Unique Games problem with alphabet size $2$, does not admit an approximation better than the trivial factor $2$ in $o(\sqrt{n})$ space in the streaming model where the edges arrive one-by-one. On the other hand, a recent work~\cite{GVV-approx17} showed that for the Max 2CSP problem (arbitrary Boolean arity two constraints) and Max-DICUT (the analog of Max-CUT on directed graphs), one can in fact beat the trivial factor $4$ algorithm (that outputs $1/4$'th the number of constraints, which is the expected value of a random assignment), and achieve a $\approx 5/2$-approximation using $O(\log n)$ space.
The status of the streaming approximability of Unique Games over larger alphabet sizes was not addressed and remained open until our work.

\subsection{Our Result}
We show that for Unique Games with alphabet size $p$, a single-pass streaming algorithm requires at least $\tilde{\Omega}(\sqrt{n})$ space to have any chance of delivering a better estimate than the trivial factor $p$ approximation. In particular, we cannot beat the trivial constraint-counting algorithm in the worst-case in polylogarithmic space.

\begin{theorem}\label{main}
Let $p \ge 2$ be an integer and $\epsilon > 0$ be a small constant. Any streaming algorithm giving $(p-\epsilon)$-approximation for Unique Games with alphabet size $p$ with success probability at least $9/10$ over its internal randomness must use $c_{p,\epsilon} \sqrt{n}$ space, for some positive constant $c_{p,\epsilon}$ that depends only on $p,\epsilon$.

Furthermore, the hardness holds for distinguishing between satisfiable instances and those for which at most a fraction $(1/p+\epsilon)$ of the constraints can be satisfied by any assignment, and when the Unique Games constraints are linear (of the form $x_u + x_v = \alpha_{uv}$ over integers mod $p$).
\end{theorem}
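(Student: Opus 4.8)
The plan is to reduce from a one-way communication lower bound for a $p$-ary analogue of the Boolean Hidden Matching problem. I would first set up the communication problem $\mathrm{HM}_{p,n}$: fix a hidden assignment $x^\star \in \mathbb{Z}_p^n$ chosen uniformly at random; the input is a sequence of $2r$ uniformly random perfect matchings $M_1,\dots,M_{2r}$ on $[n]$, and for each matched pair $e=(i,j)$ a label $\alpha_e \in \mathbb{Z}_p$, with player $P_t$ holding $\bigl(M_t,(\alpha_e)_{e\in M_t}\bigr)$. In the \textsc{yes} distribution every label equals $x^\star_i + x^\star_j$; in the \textsc{no} distribution all labels are independent and uniform. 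The key technical ingredient is that any one-way protocol $P_1 \to P_2 \to \dots \to P_{2r}$ distinguishing \textsc{yes} from \textsc{no} with constant advantage must communicate $\Omega_p(\sqrt{n})$ bits; equivalently (grouping $P_1,\dots,P_r$ as Alice and the rest as Bob) Alice's message alone must have $\Omega_p(\sqrt n)$ bits. I would prove this along the Fourier-analytic template used for Boolean Hidden (Hyper)Matching and for the Max-Cut streaming bound, but carried out over $\mathbb{Z}_p^n$: bound the total variation between the posteriors that a fixed message induces on Bob's input under the two distributions by the low-level Fourier mass of the message's indicator (with respect to the $p$-th-root-of-unity characters of $\mathbb{Z}_p^n$), and then bound that mass using the fact that a random matching (or union of $r$ matchings) couples only few characters at each Fourier level until one reaches level $\approx \sqrt n$ --- the birthday-paradox barrier. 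Pushing the combinatorial Fourier estimates through with complex characters and a size-$p$ label alphabet is where the real work lies, and this is the step I expect to be the main obstacle.

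Granting this lower bound, the rest is a routine reduction. From an instance of $\mathrm{HM}_{p,n}$ I form the Unique Games instance on vertex set $[n]$ whose constraints are exactly the linear equations $x_i + x_j = \alpha_e$ over $e = (i,j) \in M_1 \cup \dots \cup M_{2r}$; each map $a \mapsto \alpha_e - a$ is a permutation of $\mathbb{Z}_p$, so these are bona fide (and linear) UG constraints, and the stream presents them matching by matching. In the \textsc{yes} case the planted assignment $x^\star$ satisfies every constraint, so the instance is satisfiable. In the \textsc{no} case, fix any assignment $y\in\mathbb{Z}_p^n$: the number of satisfied constraints is a sum of $m := \sum_t |M_t| = \Theta(rn)$ independent $\mathrm{Bernoulli}(1/p)$ variables (the labels are independent across edges), so by Hoeffding's inequality it exceeds $(1/p+\epsilon)m$ with probability at most $\exp(-2\epsilon^2 m)$; taking $r = \Theta\!\bigl(\log p/\epsilon^2\bigr)$ makes this $\ll p^{-n}$, and a union bound over all $p^n$ assignments shows that with probability $1-o(1)$ \emph{every} assignment satisfies strictly fewer than a $(1/p+\epsilon)$ fraction of the constraints. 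Hence distinguishing value $1$ from value $\le 1/p+\epsilon$ for this family of instances already solves the \textsc{yes}/\textsc{no} communication problem.

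To finish, I would convert a streaming algorithm using space $s$ into a one-way protocol for $\mathrm{HM}_{p,n}$ of cost $O(rs)$: player $P_1$ runs the algorithm on the constraints arising from $M_1$ and sends the $s$-bit memory state to $P_2$, who resumes on $M_2$'s constraints, and so on, with $P_{2r}$ reading off the estimate and hence the \textsc{yes}/\textsc{no} answer. Since $r = \Theta(\log p/\epsilon^2)$ is a constant, the lower bound forces $s = \Omega_{p,\epsilon}(\sqrt n)$. The only care needed in gluing the pieces together is accounting for the small slacks --- the algorithm errs with probability at most $1/10$ over its coins, the \textsc{no}-case union bound fails with probability $o(1)$, and the \textsc{yes} and \textsc{no} cases each have probability $1/2$ --- which together still leave the induced protocol with constant distinguishing advantage, contradicting the communication lower bound unless $s$ is as claimed.
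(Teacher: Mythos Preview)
Your high-level plan---reduce from a $p$-ary Hidden Matching communication problem, plant a satisfiable instance in the \textsc{yes} case, and union-bound over assignments in the \textsc{no} case---matches the paper's. But the specific communication problem you set up, and the way you propose to lower-bound it, has a real gap. You group $P_1,\dots,P_r$ as Alice and $P_{r+1},\dots,P_{2r}$ as Bob and claim Alice's single message must have $\Omega(\sqrt n)$ bits. This is false: Bob holds $r\ge 2$ \emph{perfect} matchings, whose union is an $r$-regular (multi)graph and therefore always contains cycles; along any even cycle the alternating sum of the labels is $0$ in the \textsc{yes} case and uniform in $\mathbb{Z}_p$ in the \textsc{no} case, so Bob distinguishes with constant advantage using zero bits from Alice. (Replacing perfect by $\alpha$-partial matchings does not save this: with $r\ge 2$ of them Bob still sees a cycle with constant probability.) More generally, the Fourier template you invoke is for a two-party problem in which one side holds $x$ and the other holds a \emph{single} matching; it does not directly apply when Bob's input is a graph with cycles.

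The paper avoids this by keeping Bob's side to \emph{one} partial matching and supplying the multi-stage structure via a hybrid argument rather than by enlarging Bob's input. Concretely, Alice holds $x$ (not labeled matchings), Bob holds one $\alpha$-partial matching $M$ and $w$ that is either $Mx$ or uniform, and one proves the $\Omega(\sqrt n)$ one-way lower bound for this single-matching problem via Fourier analysis over $\mathbb{Z}_p^n$. To connect to streaming with $k$ stages, one tracks the total-variation distance between the memory distributions $S^Y_j$ and $S^N_j$; some ``informative'' index $j^\star$ has a TVD jump of $\Omega(1/k)$. Alice, knowing $x$, simulates the first $j^\star$ \textsc{yes}-stages and sends the $s$-bit memory; Bob feeds his single stage as stage $j^\star+1$ and tests which of the two known memory distributions the result is closer to. This hybrid step---and in particular the fact that Alice must hold $x$ so she can manufacture $j^\star$ consistent stages---is the piece your plan is missing. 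Your \textsc{no}-case analysis via independent $\mathrm{Bernoulli}(1/p)$ labels (conditioning on the matchings) is fine and in fact a bit cleaner than the paper's negative-correlation Chernoff, but you need to reroute the reduction through the single-matching two-party problem and the informative-index hybrid for the argument to go through.
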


\subsection{Proof Structure}
In our proof, we first introduce in Section~\ref{sec:p-ary-HM}, a communication problem called the $p$-ary Hidden Matching problem, which is a $p$-ary variant of the (Boolean) Hidden Matching problem proposed by Gavinsky et al.\cite{gavinsky2007exponential} and first used for streaming lower bounds by Verbin and Yu in \cite{Verbin2011The}. The (distributional) $p$-ary Hidden Matching problem is a two-party one-way communication problem where Alice holds a random $p$-ary vector $x \in \mathbb{Z}_p^n$ and Bob holds a random matching of size $r = \alpha n$ (for some suitable $\alpha \in (0,1)$) and a vector $w \in \mathbb{Z}_p^r$. 
Alice must send one message to Bob, based on which he must distinguish between two distributions on the inputs. In both distributions $x$ is uniformly random, and $M$ is a random matching of the prescribed size. In the YES distribution, we set $w_e = x_u + x_v$ for each $e = (u,v)$ in the matching (i.e., $w = Mx$ where $M \in \{0,1\}^{\alpha n \times n}$ is the incidence matrix of the matching); in the NO distribution, $w$ is uniformly random. We prove a communication lower bound of this problem using Fourier-analytic methods, which is similar to \cite{Kahn1988The}.

The vector $w$ and the matching in the $p$-ary Hidden Matching problem can be seen as a description of some Unique Game constraints $x_u + x_v = w_e$. Of course each such instance individually is trivially always satisfiable. We can construct hard instances of Unique Game by combining together $O(1/\epsilon^2)$ independent copies of the random matching and corresponding $w$. In the YES case, we let $w$ be according to the \emph{same} (random) $x$, so that the constraints can be satisfied by $x$. In the NO case, the various choices of $w$ are random and independent. This implies that every assignment $x \in \mathbb{Z}_p^n$ is close in performance to a random assignment, and thus satisfies only $\approx 1/p$ of the constraints, by concentration bounds. 

We prove that a low-space streaming algorithm cannot distinguish between these distributions, which then implies Theorem~\ref{main}.
To prove this indistinguishability result, we give a reduction from
the $p$-ary Hidden Matching problem. The proof is a classical hybrid argument since the streaming instance can be seen as a ``multi-stage'' version of the communication problem.

\subsection{Differences from \cite{kapralov2015streaming}}
Our approach heavily borrows from the Max-CUT streaming lower bound from of Kapralov, Khanna, and Sudan~\cite{kapralov2015streaming}. Compared to their work, we only prove Theorem~\ref{main} for a \emph{worst-case} arrival order of constraints, whereas the Max-CUT hardness result is shown even for a \emph{random} arrival order for the edges. At each stage, instead a matching, Kapralov et. al. used a sub-critical random Erd\"{o}s-R\'{e}nyi graph with edge probability $\approx \alpha/n$. If the parameter $\alpha$ is sufficiently small, the graph obtained by putting together edges from all the stages is close in distribution to a random graph. As a result the arrival of edges in a random order does not help the streaming algorithm. For the analysis of each stage, they use a communication problem called the Boolean Hidden Partition problem that is variant of the Boolean Hidden Matching problem, since they have to work with Erd\"{o}s-R\'{e}nyi graphs rather than random matchings. This requires changes to some components in the proof outline of \cite{gavinsky2007exponential,Verbin2011The}. 

Our communication problem still concerns matchings (rather than sub-critical Erd\"{o}s-R\'{e}nyi graphs), though we allow for (components of) $x,w$ to take values from $\mathbb{Z}_p$ instead of Boolean values. By using Fourier analysis over the group $\mathbb{Z}_p$ instead of $\mathbb{Z}_2$, we are able to adapt the communication lower bound of \cite{gavinsky2007exponential}.

It remains an interesting question to prove a streaming hardness for Unique Games similar to Theorem~\ref{main} for the case of random arrival order of constraints. 

\section{Preliminaries}
Let $\mathbb{Z}_p = \{0,1,\dots,p-1\}$ be the ring with addition and multiplication modulo $p$. (We do not assume that $p$ is a prime.)
Fourier analysis over $\mathbb{Z}_p^n$ plays a key role in our proof. Consider the space of functions $\mathbb{Z}_p^n \to \mathbb{C}$. We define the inner product and $2$-norm in it by
$$\langle f, g \rangle = \frac{1}{p^n} \sum_{x \in \mathbb{Z}_p^n} f(x)\overline{g(x)} \qquad \norm{f}_2^2 = \langle f, f \rangle = \frac{1}{p^n} \sum_{x \in \mathbb{Z}_p^n} |f(x)|^2$$
\noindent
The Fourier transform of $f$ is a function $\hat{f}:\mathbb{Z}_p^n \to \mathbb{C}$ defined by
$$\hat{f}(z)=\langle f, \chi_z \rangle = \frac{1}{p^n} \sum_{x \in \mathbb{Z}_p^n} f(x)\overline{\omega^{z \cdot x}}$$
\noindent where $\chi_z : \mathbb{Z}_p^n \to \mathbb{C}$ is the character $\chi_z(x) =\omega^{z \cdot x} $ with ``$\cdot$'' being the scalar product and $\omega = e^{2\pi i/p}$ being the primitive $p$'th root of unity. For $z \in \mathbb{Z}_p^n$, we denote by $|z|$ the number of nonzero entries in $z$.

In our later proof, we use the following two lemmas concerning Parseval's identity and hypercontractivity.
\begin{lemma}[Parseval]
For every function $f : \mathbb{Z}_p^n \to \mathbb{C}$, we have
$$\norm{f}_2^2 = \sum_{z\in \mathbb{Z}_p^n} |\hat{f}(z)|^2.$$
\end{lemma}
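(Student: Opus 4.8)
The plan is to prove this via the standard route: establish that the characters $\{\chi_z : z \in \mathbb{Z}_p^n\}$ form an orthonormal family under the given inner product, conclude that they form an orthonormal basis of the $p^n$-dimensional space of functions $\mathbb{Z}_p^n \to \mathbb{C}$, expand $f$ in this basis with coefficients $\hat{f}(z)$, and read off the squared $2$-norm.

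First I would verify orthonormality. For $z, z' \in \mathbb{Z}_p^n$,
$$\langle \chi_z, \chi_{z'}\rangle = \frac{1}{p^n}\sum_{x \in \mathbb{Z}_p^n} \omega^{(z-z')\cdot x} = \prod_{j=1}^n \left(\frac{1}{p}\sum_{a \in \mathbb{Z}_p} \omega^{(z_j - z'_j) a}\right).$$
For each coordinate $j$, if $z_j = z'_j$ the inner sum equals $1$; otherwise, writing $k = z_j - z'_j$ with $0 < k < p$, we have $\omega^k \neq 1$ while $(\omega^k)^p = 1$, so the geometric sum $\sum_{a=0}^{p-1}(\omega^k)^a = \frac{(\omega^k)^p - 1}{\omega^k - 1}$ vanishes. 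Hence $\langle \chi_z, \chi_{z'}\rangle$ equals $1$ if $z = z'$ and $0$ otherwise. The only point that needs any care is that this cancellation is valid even when $p$ is composite: all that is used is that $\omega = e^{2\pi i /p}$ is a primitive $p$-th root of unity, so $\omega^k \neq 1$ for $1 \le k \le p-1$, which holds regardless of primality. So this is really the only ``obstacle,'' and it is minor. Since we then have $p^n$ orthonormal vectors in a space of dimension $p^n$, they form an orthonormal basis, and every $f$ admits an expansion $f = \sum_{z} c_z \chi_z$; pairing with $\chi_{z'}$ and using orthonormality gives $c_{z'} = \langle f, \chi_{z'}\rangle = \hat{f}(z')$, i.e.\ $f = \sum_z \hat{f}(z)\chi_z$.

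Finally I would compute, using conjugate-bilinearity of the inner product and orthonormality of the characters,
$$\norm{f}_2^2 = \langle f, f\rangle = \Big\langle \sum_{z} \hat{f}(z)\chi_z,\ \sum_{z'} \hat{f}(z')\chi_{z'}\Big\rangle = \sum_{z, z'} \hat{f}(z)\overline{\hat{f}(z')}\,\langle \chi_z, \chi_{z'}\rangle = \sum_{z \in \mathbb{Z}_p^n} |\hat{f}(z)|^2,$$
which is the claimed identity. (Equivalently, one can skip the basis language and directly expand $\norm{f}_2^2 = \frac{1}{p^n}\sum_x |f(x)|^2$ by substituting the inverse Fourier formula $f(x) = \sum_z \hat{f}(z)\omega^{z\cdot x}$ and its conjugate, then swap summation order and collapse $\frac{1}{p^n}\sum_x \omega^{(z-z')\cdot x}$ to the indicator of $z = z'$ via the same geometric-sum computation; this is the same argument written out coordinatewise.)
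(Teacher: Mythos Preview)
Your proof is correct and is exactly the standard argument; the paper itself does not give a proof of this lemma, stating it without proof as a well-known fact. There is nothing to compare here.
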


\begin{lemma}[Hypercontractivity Theorem, \cite{o2014analysis}]
For function $f \in L_2(\mathbb{Z}_p^n)$, if $1 < q < 2$ and $ 0 \le \rho \le \sqrt{q-1} (1/p)^{1/q-1/2}$, we have
$$\norm{T_\rho f}_2 \le \norm{f}_q$$
where $T_\rho$ is the operator defined by $T_\rho f(x) = \sum_{z \in \mathbb{Z}_p^n} \hat{f}(z) \rho^{|z|} \chi_z(x)$.
\end{lemma}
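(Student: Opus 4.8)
Since $(\mathbb{Z}_p^n,\text{uniform})$ is the $n$-fold product of the one-letter space $(\mathbb{Z}_p,\text{uniform})$, in which every atom has mass $\lambda = 1/p$, and $T_\rho$ is the corresponding product noise operator, the plan is to prove the lemma by the classical three moves for hypercontractivity on finite product spaces — duality, reduction to nonnegative functions, and tensorization — leaving a single one-variable inequality whose threshold produces the stated constant; this is exactly the general hypercontractivity theorem of \cite{o2014analysis} specialized to this space.

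First I would pass to the dual statement. The operator $T_\rho$ is self-adjoint for $\langle\cdot,\cdot\rangle$ since its Fourier multiplier $z \mapsto \rho^{\abs{z}}$ is real, so H\"older's inequality with the conjugate exponent $q' = q/(q-1) > 2$ gives $\norm{T_\rho f}_2 = \sup_{\norm{g}_2 \le 1} \abs{\langle f, T_\rho g\rangle} \le \norm{f}_q \sup_{\norm{g}_2 \le 1} \norm{T_\rho g}_{q'}$; hence it suffices to prove the $(2 \to q')$ estimate $\norm{T_\rho g}_{q'} \le \norm{g}_2$. A one-line computation rewrites the hypothesis $\rho \le \sqrt{q-1}\,(1/p)^{1/q - 1/2}$ as $\rho \le \tfrac{1}{\sqrt{q'-1}}\,(1/p)^{1/2 - 1/q'}$, the usual shape of the hypercontractive constant.

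Second, on each coordinate $T_\rho g = (1-\rho)\mathbb{E}[g] + \rho g$ with $0 \le \rho \le 1$ (note $\sqrt{q-1} < 1$ and $(1/p)^{1/q-1/2} \le 1$), so $T_\rho^{\otimes n}$ has a nonnegative kernel; hence $\abs{T_\rho g} \le T_\rho \abs{g}$ pointwise, $\norm{T_\rho g}_{q'} \le \norm{T_\rho \abs{g}}_{q'}$, and $\norm{\abs{g}}_2 = \norm{g}_2$, so we may assume $g \ge 0$, and after scaling $\mathbb{E}[g] = 1$, i.e.\ $g = 1 + h$ with $\mathbb{E}[h] = 0$ and $h \ge -1$. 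Third, I would tensorize by induction on $n$: writing $x = (x_1, x')$ and factoring $T_\rho^{\otimes n} = T_\rho^{(1)} \circ T_\rho^{\otimes (n-1)}$, one applies the $n = 1$ bound in the coordinate $x_1$, then Minkowski's integral inequality (legitimate because $q'/2 \ge 1$), then the induction hypothesis in the coordinates $x'$; the exponents match so the estimate reduces cleanly to $\norm{T_\rho^{\otimes n} g}_{q'} \le \norm{g}_{L^2(\mathbb{Z}_p^n)}$.

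The one step that carries all the content is the base case $n = 1$: show $\mathbb{E}_{\mathbb{Z}_p}\!\big[(1 + \rho h)^{q'}\big] \le \big(1 + \mathbb{E}[h^2]\big)^{q'/2}$ for every mean-zero $h : \mathbb{Z}_p \to \mathbb{R}$ with $h \ge -1$. Here one uses the standard reduction (as in \cite{o2014analysis}; by a Lagrange/perturbation argument the extremal $h$ takes at most two values, and among two-valued mean-zero configurations on the uniform $p$-point space the worst puts a single atom of mass $1/p$ at one level) to collapse the inequality to hypercontractivity for the $\lambda$-biased bit with $\lambda = 1/p$; the one-real-parameter inequality that remains can be handled by a Taylor expansion around $\rho = 0$ or by comparison with Beckner's two-point inequality, and it holds precisely for $\rho \le \tfrac{1}{\sqrt{q'-1}}(1/p)^{1/2 - 1/q'}$. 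I expect this last inequality to be the whole difficulty: the earlier steps are formal, whereas here one must both justify the passage to a two-valued $h$ and locate the exact threshold — in particular, getting the exponent of $1/p$ to be $1/2 - 1/q'$ (equivalently $1/q - 1/2$) requires tracking precisely how the mass $1/p$ of the light atom enters each of the two norms. Alternatively, one may simply invoke \cite{o2014analysis}, which contains exactly this $\lambda$-atom hypercontractive estimate and hence, through the reductions above, the lemma.
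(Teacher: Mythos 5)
The paper does not prove this lemma at all --- it is imported verbatim from \cite{o2014analysis} (the general hypercontractivity theorem for a finite product space with minimum atom probability $\lambda=1/p$), so there is no internal proof to compare against. Your outline is a faithful reconstruction of how that cited result is actually established: the duality step is right ($T_\rho$ is self-adjoint because its multiplier $\rho^{\abs{z}}$ is real, and the exponent conversion $\sqrt{q-1}\,(1/p)^{1/q-1/2} = (q'-1)^{-1/2}(1/p)^{1/2-1/q'}$ is correct), the positivity of the kernel of $T_\rho=(1-\rho)\mathbb{E}+\rho\,\mathrm{Id}$ per coordinate legitimately reduces to nonnegative $g$ (and also disposes of the fact that the paper's $f$ is a priori $\mathbb{C}$-valued, though in its only application $f$ is an indicator), and the Minkowski tensorization is standard since $q'/2\ge 1$. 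The only place your sketch is thinner than the source is the base case: the reduction of the $p$-point inequality to the $\lambda$-biased two-point inequality is carried out in \cite{o2014analysis} by an induction over the outcome space (via the two-function form of hypercontractivity) rather than by the Lagrange-type extremality argument you gesture at, and making the latter rigorous (justifying that the extremizer is two-valued with the light atom of mass exactly $1/p$) is the one step I would not call formal. Since you explicitly offer to fall back on citing \cite{o2014analysis} for exactly that estimate, which is what the paper itself does for the entire lemma, there is no gap.
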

Using the above theorem, we can derive an estimate on the sum of Fourier coefficients weighted by its support size.

\begin{lemma}\label{lemhct}
For a set $A \subseteq \mathbb{Z}_p^n$ and let $f$ be its indicator function and let $|z|$ denote the number of non-zero coordinates of $z\in \mathbb{Z}_p^n$. Then  for every $\delta \in [0,1/p]$, we have
$$\sum_{z \in \mathbb{Z}_p^n} \delta^{|z|} |\hat{f}(z)|^2 \le  \left( \frac{|A|}{p^n}\right)^{2/(1+p \delta)}.$$
\end{lemma}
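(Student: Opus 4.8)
The plan is to apply the Hypercontractivity Theorem with the parameter $\rho = \sqrt{\delta}$ and to choose $q$ so that $\rho$ sits exactly at the boundary of the allowed range, i.e. $\sqrt{q-1}\,(1/p)^{1/q-1/2} = \sqrt{\delta}$. First I would observe that since $f$ is the indicator function of $A$, we have $\norm{f}_q^q = \frac{1}{p^n}\sum_x |f(x)|^q = |A|/p^n$ for every $q$, so $\norm{f}_q = (|A|/p^n)^{1/q}$. On the Fourier side, Parseval applied to $T_\rho f$ gives $\norm{T_\rho f}_2^2 = \sum_{z} |\hat f(z)|^2 \rho^{2|z|} = \sum_z \delta^{|z|} |\hat f(z)|^2$, which is precisely the left-hand side of the claimed inequality. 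Hence the Hypercontractivity Theorem yields
$$\sum_{z \in \mathbb{Z}_p^n} \delta^{|z|} |\hat f(z)|^2 \;=\; \norm{T_\rho f}_2^2 \;\le\; \norm{f}_q^2 \;=\; \left(\frac{|A|}{p^n}\right)^{2/q},$$
so it remains only to check that the admissible $q$ can be taken with $2/q \ge 2/(1+p\delta)$, equivalently $q \le 1 + p\delta$, while still satisfying $\rho = \sqrt{\delta} \le \sqrt{q-1}\,(1/p)^{1/q - 1/2}$.

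The main obstacle — really the only nontrivial point — is verifying that the choice $q = 1 + p\delta$ is legitimate, i.e. that $1 < q < 2$ and that $\sqrt{\delta} \le \sqrt{q-1}\,(1/p)^{1/q-1/2}$ for this $q$ when $\delta \in (0, 1/p]$. The constraint $1 < q < 2$ holds because $\delta \in (0,1/p]$ forces $q = 1 + p\delta \in (1, 2]$ (and the endpoint $q=2$, i.e. $\delta = 1/p$, can be handled by continuity or by noting $T_\rho$ with $\rho = 1/\sqrt{p}$ still satisfies the bound, or simply restricting to $\delta < 1/p$ and taking a limit). For the hypercontractivity constraint, writing $q - 1 = p\delta$, we need $\sqrt{\delta} \le \sqrt{p\delta}\cdot p^{1/2 - 1/q}$, i.e. $1 \le \sqrt{p}\cdot p^{1/2-1/q} = p^{1 - 1/q}$, which holds since $p \ge 2$ and $q \ge 1$. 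Thus $\rho = \sqrt{\delta}$ is within the hypercontractive regime for $q = 1 + p\delta$, and plugging $2/q = 2/(1+p\delta)$ into the displayed bound finishes the proof. (If one prefers to stay strictly inside the open interval, take any $q < 1 + p\delta$, apply the argument, and let $q \uparrow 1 + p\delta$; since $x \mapsto (|A|/p^n)^{2/x}$ is continuous and $|A|/p^n \le 1$ makes it monotone, the limit gives exactly $(|A|/p^n)^{2/(1+p\delta)}$.)
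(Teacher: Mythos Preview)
Your proof is correct and follows essentially the same route as the paper: set $\rho=\sqrt{\delta}$, take $q=1+p\delta$, verify that $\rho$ lies in the hypercontractive range (the paper writes this as $\rho=\sqrt{q-1}\,(1/p)^{1/2}\le \sqrt{q-1}\,(1/p)^{1/q-1/2}$, which is exactly your check $p^{1-1/q}\ge 1$), and combine $\norm{T_\rho f}_2^2=\sum_z \delta^{|z|}|\hat f(z)|^2$ with $\norm{f}_q=(|A|/p^n)^{1/q}$. Your explicit treatment of the endpoint $q=2$ is a nice touch the paper glosses over.
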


\begin{proof}
Let $\rho = \sqrt{q-1}(1/p)^{1/2} \le \sqrt{q-1} (1/p)^{1/q-1/2}$, then $q = 1+p \rho^2$. By the hypercontractivity theorem, we know that
$$\norm{T_\rho f}_2 \le \norm{f}_{1+p \rho^2}$$

Meanwhile, we have $\norm{T_\rho f}_2^2 = \sum_{z \in \mathbb{Z}_p^n} \rho^{2|z|} |\hat{f}(z)|^2$. Taking the square of the equation above and setting $\delta = \rho^2$ will get our desired result.
\end{proof}

\section{$p$-ary Hidden Matching}
\label{sec:p-ary-HM}
In this section, we analyze a two-party (distributional) one-way communication problem, defined as follows.

\textbf{$p$-ary Hidden Matching problem.} Alice gets a random vector $x \in \mathbb{Z}_p^n$. Bob gets a random $\alpha$-partial matching $G$ (i.e., a matching of size $\alpha n$ on $\{1,2,\dots,n\}$) and a vector $w \in \mathbb{Z}_p^{\alpha n}$. Let $M \in \{0,1\}^{\alpha n \times n}$ be the incidence matrix of $G$, i.e., $M_{ev}=1$ if $v$ is an endpoint of $e$ and $0$ otherwise. There are two choices for the distribution of $w$, distinguishing which is the communication problem.
\begin{itemize}
    \item
In the \textbf{YES} distribution $w$ is correlated with $x$ as $w = Mx$ (arithmetic done in $\mathbb{Z}_p$); 
\item in the \textbf{NO} distribution, $w$ is uniformly random in $\mathbb{Z}_p^n$ (and thus independent of $x$). 
\end{itemize}

Alice must send a message to Bob, based on which Bob needs to distinguish distribution $w$ belongs to. Formally, Bob must output Yes or No (based on Alice's message and his input $w$), and we say a protocol achieves advantage $\epsilon$ if the difference in probability of Bob outputting Yes differs under the Yes and No distributions by at least $\epsilon$. The following shows that Alice needs to send at least $\Omega(\sqrt{n})$ bits for Bob to achieve constant advantage.

\iffalse
\textbf{Input distribution} In our setting, we consider the case where (1) Alice's $x$ is uniformly random in $\mathbb{Z}_p^n$; (2) Bob's $G$ is uniformly random in the sets of all $\alpha$-partial matchings and (3) the answer is \textbf{YES}  or \textbf{NO} with both probability $1/2$, independent of $Alice's$ input.
\fi
%it succeeds with probability $1/2+\epsilon$ on our input distribution. We will prove that any protocol with constant advantage requires $\Omega(\sqrt{n})$ bits of communication.

\begin{theorem}\label{phm}
For $\alpha \in (0, 1/4]$, any protocol that achieves advantage $\epsilon > 0$ for the $p$-ary Hidden Matching problem requires at least $\Omega(\epsilon \sqrt{n})$ bits of communication from Alice to Bob.
\end{theorem}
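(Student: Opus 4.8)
The plan is to follow the classical information-to-Fourier route used for Boolean Hidden Matching, adapted to $\mathbb{Z}_p$. Suppose Alice sends a message of length $c$ bits. By a standard averaging/rectangle argument, it suffices to show that for any fixed message, the set $A \subseteq \mathbb{Z}_p^n$ of Alice's inputs consistent with that message has the property that if $|A|/p^n \ge 2^{-c}$ is not too small, then Bob cannot distinguish YES from NO with advantage more than roughly $\epsilon$ unless $2^{-c} \le 2^{-\Omega(\epsilon\sqrt n)}$. So fix a message and let $f = \mathbf{1}_A$ with $\mu = |A|/p^n$. Conditioned on this message, Alice's vector $x$ is uniform on $A$, and Bob sees $(G, w)$. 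I would bound the total variation distance, as a function of $(G,w)$, between the YES distribution (where $w = Mx$, $x$ uniform on $A$) and the NO distribution (where $w$ is independent uniform).

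The key computation is to expand the YES distribution's density in Fourier characters. For a fixed matching $G$ with incidence matrix $M$, the distribution of $w = Mx$ for $x$ uniform on $A$ has, at each point $w$, probability proportional to $\sum_{x \in A : Mx = w} 1$, and its Fourier expansion over $\mathbb{Z}_p^{\alpha n}$ is governed by $\sum_{y \in \mathbb{Z}_p^{\alpha n}} \hat f(M^\top y)\,\chi_y(w)$. The squared $L_2$ distance from uniform (averaged over the random matching $G$) then reduces to an expression of the form $\sum_{y \ne 0} \mathbb{E}_G |\hat f(M^\top y)|^2$. Now $M^\top y \in \mathbb{Z}_p^n$ is a vector supported on the $\le 2|y|$ endpoints of the edges of $G$ indexed by $\mathrm{supp}(y)$; crucially, because the matching is random, for a fixed $z \in \mathbb{Z}_p^n$ the probability that $z$ arises as $M^\top y$ for some $y$ with $|y| = k$ requires $|z| \le 2k$ and, more importantly, a uniformly chosen partial matching ``hits'' a fixed support of size $2k$ with probability at most roughly $(Ck/n)^{k}$-ish. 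Combining this with Parseval ($\sum_z |\hat f(z)|^2 = \mu$) and, to control the small-support terms, Lemma~\ref{lemhct}, I get a bound of the shape
$$\mathbb{E}_G \,\bigl\| \mathrm{dist}_{\mathrm{YES}}(\cdot\mid G) - \mathrm{Unif} \bigr\|_2^2 \;\le\; \sum_{k \ge 1} \left(\frac{Ck}{n}\right)^{k} \mu^{2/(1+\text{something})} \cdot (\text{count of }z),$$
which, after converting $L_2$ to $L_1$ via Cauchy–Schwarz (paying a $\sqrt{p^{\alpha n}}$ factor) and choosing $\alpha$ a small constant, is $\le \mathrm{poly}(n)\cdot \mu^{c'}$ for the dominant $k=1$ term, giving overall advantage $O(\sqrt{n}\cdot \mu^{\Omega(1)})$ — forcing $\mu \le 2^{-\Omega(\epsilon\sqrt n)}$ hence $c = \Omega(\epsilon\sqrt n)$.

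The main obstacle, and where I would spend the most care, is the combinatorial estimate on $\mathbb{E}_G|\hat f(M^\top y)|^2$ and its aggregation over all $y$: one must correctly account for (i) the randomness of the matching producing a given support pattern $z$, (ii) the fact that distinct $y$ can map to the same $z = M^\top y$ once the matching is fixed, and (iii) using hypercontractivity (Lemma~\ref{lemhct}) rather than bare Parseval for the terms with $|z|$ small so that the $k=1$ term — which is the bottleneck and is exactly the place the $\sqrt n$ comes from — is bounded by $\mu^{2/(1+p\delta)}$ with $\delta$ a suitable constant multiple of $1/n$ relative to the number of edges, not merely by $\mu$. Getting the interplay between the $\binom{n}{2}$-many possible single edges, the $\le \alpha n$ slots in the matching, and the hypercontractive exponent to yield a net negative power of $\mu$ (after the $\sqrt{p^{\alpha n}}$ blow-up from Cauchy–Schwarz) is the crux; this is precisely the $\mathbb{Z}_p$ analogue of the argument in \cite{gavinsky2007exponential,Kahn1988The}, and the Fourier analysis over $\mathbb{Z}_p$ goes through because all that is used is orthonormality of the characters $\chi_z$ and the hypercontractive inequality of Lemma~\ref{lemhct}. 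Finally I would translate the $L_1$/advantage bound for a fixed message into the theorem by a union bound over the $\le 2^c$ messages, as in the standard argument.
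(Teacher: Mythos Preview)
Your overall architecture is the same as the paper's: condition on Alice's message, let $A$ be the fiber, relate the Fourier coefficients of the induced distribution $p_M$ on $w$ to those of $f=\mathbf 1_A$ via $\widehat{p_M}(z)=\frac{p^n}{|A|\,p^{\alpha n}}\hat f(M^\top z)$, bound the combinatorics of $M^\top z$ hitting a prescribed support, and use hypercontractivity for low levels and Parseval for high levels. So the route is right.

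Where your write-up goes off the rails is the quantitative shape of the final bound. Two concrete issues:

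\emph{First}, the $\sqrt{p^{\alpha n}}$ blow-up from Cauchy--Schwarz does not survive: once you substitute $\widehat{p_M}(z)=\frac{p^n}{|A|\,p^{\alpha n}}\hat f(M^\top z)$ into $p^{2\alpha n}\sum_{z\ne 0}|\widehat{p_M}(z)|^2$, the $p^{\alpha n}$ factors cancel exactly and you are left with $\frac{1}{\mu^2}\sum_{z\ne 0}|\hat f(M^\top z)|^2$. There is no exponential factor that a ``net negative power of $\mu$'' needs to defeat.

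\emph{Second}, and more importantly, the hypercontractive level bound does not give you a factor of $\mu^{c'}$ with a positive constant $c'$; after the $1/\mu^2$ normalization it gives a bound that is \emph{polynomial in $c=\log(1/\mu)$}. Concretely, Lemma~\ref{lemhct} with $\delta=k/(4pc)$ yields $\frac{1}{\mu^2}\sum_{|z|=k}|\hat f(z)|^2 \le (O(pc/k))^k$. Combined with the matching combinatorics $\binom{\alpha n}{k/2}/\binom{n}{k}$, the dominant $k=2$ term is $\Theta(\alpha c^2/n)$, so $\mathbb E_M\|p_M-U\|_{\mathrm{tvd}}\le O(c/\sqrt n)$. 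This is what forces $c=\Omega(\epsilon\sqrt n)$. Your asserted bound ``advantage $\le O(\sqrt n\cdot \mu^{\Omega(1)})$'' is not what comes out of the calculation, and in any case it would not imply $\mu\le 2^{-\Omega(\epsilon\sqrt n)}$: solving $\sqrt n\,\mu^{\Omega(1)}\ge \epsilon$ only gives $\mu\ge \mathrm{poly}(\epsilon/\sqrt n)$, i.e.\ $c=O(\log n)$, the wrong direction and the wrong order. The fix is exactly the paper's Lemma~\ref{lem:fourier-level}: convert Lemma~\ref{lemhct} into a per-level bound in terms of $c$, then sum.

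One minor point: the final aggregation over messages is not a union bound but an averaging/partition argument (most $x$ lie in ``large'' cells of the message partition, and on each large cell the conditional advantage is $\le\epsilon$).
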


The proof of the above lemma is the main result of this section. Our proof closely follows the structure of \cite{gavinsky2007exponential}, from which the main difference is that our proof has to work for the $p$-ary case.

Before we embark on the proof, we need some more lemmas. We begin with an application of hypercontractivity to bound the Fourier mass at any level.

\begin{lemma}
\label{lem:fourier-level}
For a set $A \subseteq \mathbb{Z}_p^n$ with size at least $p^n/2^c$ and let $f$ be its indicator function and let $|z|$ denote the number of non-zero coordinates of $z\in \mathbb{Z}_p^n$. Then for every $k \le 4c$ we have 
$$\frac{p^{2n}}{|A|^2} \sum_{|z|=k} |\hat{f}(z)|^2 \le \left(\frac{4\sqrt{2}p c}{k}\right)^k \ . $$
\end{lemma}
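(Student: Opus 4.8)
The plan is to obtain this directly from Lemma~\ref{lemhct} by choosing its free parameter $\delta$ appropriately and then discarding all Fourier levels except level $k$. There is no need for a fresh hypercontractive computation; Lemma~\ref{lemhct} already packages everything we need.

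First I would note that all summands in Lemma~\ref{lemhct} are nonnegative, so keeping only the terms with $|z|=k$ gives, for every $\delta\in[0,1/p]$,
$$\delta^{k}\sum_{|z|=k}|\hat f(z)|^2 \;\le\; \Bigl(\tfrac{|A|}{p^n}\Bigr)^{2/(1+p\delta)} .$$
Dividing through by $\delta^{k}$ and by $(|A|/p^n)^2$, and simplifying the exponent via $\tfrac{2}{1+p\delta}-2 = -\tfrac{2p\delta}{1+p\delta}$, yields
$$\frac{p^{2n}}{|A|^2}\sum_{|z|=k}|\hat f(z)|^2 \;\le\; \delta^{-k}\Bigl(\tfrac{|A|}{p^n}\Bigr)^{-\frac{2p\delta}{1+p\delta}} .$$
Now I invoke the size hypothesis $|A|/p^n\ge 2^{-c}$: since the exponent is negative, $(|A|/p^n)^{-2p\delta/(1+p\delta)}\le 2^{2cp\delta/(1+p\delta)}\le 2^{2cp\delta}$, the last inequality because $1+p\delta\ge 1$. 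Hence the left-hand side is at most $\delta^{-k}\,2^{2cp\delta}$.

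The final step is to select $\delta$. The natural choice is $\delta = k/(4cp)$: the hypothesis $k\le 4c$ is exactly what guarantees $\delta\le 1/p$, so Lemma~\ref{lemhct} is applicable, and this choice makes the exponential factor collapse neatly, $2^{2cp\delta}=2^{k/2}=(\sqrt2)^k$, while $\delta^{-k}=(4cp/k)^k$. Multiplying, $\delta^{-k}2^{2cp\delta} = (4cp/k)^k(\sqrt2)^k = (4\sqrt2\,cp/k)^k$, which is precisely the claimed bound. (The degenerate case $k=0$ is trivial, since $\hat f(0)=|A|/p^n$ makes the left side equal to $1$.)

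I expect no real obstacle here. The only points requiring care are verifying that $\delta=k/(4cp)$ lies in the admissible range $[0,1/p]$ — this is where, and the only place where, the hypothesis $k\le 4c$ enters — and checking that the two one-line estimates ($x^{-t}\le 2^{ct}$ for $x\ge 2^{-c}$, $t>0$, and $2cp\delta/(1+p\delta)\le 2cp\delta$) are oriented correctly. A more hands-off alternative would be to bound $\min_{\delta\in(0,1/p]}\delta^{-k}2^{2cp\delta}$ by calculus, but the explicit substitution above already produces the stated constant $4\sqrt2$ without slack worth worrying about.
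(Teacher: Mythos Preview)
Your proof is correct and follows essentially the same route as the paper: extract the level-$k$ Fourier mass from Lemma~\ref{lemhct}, manipulate the exponent, bound $(p^n/|A|)^{2p\delta/(1+p\delta)}\le (p^n/|A|)^{2p\delta}\le 2^{2cp\delta}$, and set $\delta=k/(4cp)$. You are in fact slightly more explicit than the paper in noting that $k\le 4c$ is precisely what ensures $\delta\le 1/p$, and in handling $k=0$.
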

\begin{proof}
By Lemma \ref{lemhct}, given some constant $0 \le \delta \le 1/p$, we have
\begin{align*}
    \frac{p^{2n}}{|A|^2} \sum_{|z|=k} |\hat{f}(z)|^2 &\le
    \frac{p^{2n}}{|A|^2} \frac{1}{\delta^k}\sum_{z\in \mathbb{Z}_p^n} \delta^{|z|}|\hat{f}(z)|^2 \\
    &\le \frac{p^{2n}}{|A|^2} \frac{1}{\delta^k} \left( \frac{|A|}{p^n}\right)^{2/(1+p \delta)} \\
    &= \frac{1}{\delta^k} \left( \frac{p^{n}}{|A|}\right)^{2p\delta/(1+p \delta)} \\
    &\le \frac{1}{\delta^k} \left( \frac{p^{n}}{|A|}\right)^{2p\delta}.
\end{align*}
Choosing $\delta = k/4cp$ will give our desired result.
\end{proof}
We also need a combinatorial lemma about counting of some matchings.
\begin{lemma}
Let $G$ be a uniformly random $\alpha$-partial matching and $M$ be its incidence matrix. If $x \in \mathbb{Z}_p^n$ has $|x|=k$ for some even\footnote{We note that if $|x|$ is odd, then there can be no $z$ such that $M^T z =x$.} $k$, then
$$\Pr_G[\exists z \in \mathbb{Z}_p^{\alpha n} \mathrm{s.t.}M^T z = x] \le \binom{\alpha n}{k/2} \bigg/ \binom{n}{k} \ . $$
\end{lemma}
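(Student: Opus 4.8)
The plan is to pin down exactly which vectors $z$ can satisfy $M^T z = x$, and then turn the resulting condition into a transparent counting problem via a symmetrization argument.

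\emph{Step 1 (structure of $M^T z$).} Since $G$ is a matching, every vertex lies in at most one edge, so $(M^T z)_v = z_e$ when $v$ is an endpoint of some $e \in G$, and $(M^T z)_v = 0$ when $v$ is unmatched. Thus $M^T z = x$ forces: every vertex in the support $S := \{v : x_v \neq 0\}$ is matched by $G$, and whenever $e = \{u,v\} \in G$ with $v \in S$ we must have $x_u = z_e = x_v \neq 0$, so $u \in S$ as well. Hence, if any valid $z$ exists at all, the set $S$ (which has $|S| = k$) must be a disjoint union of exactly $k/2$ edges of $G$; this also re-explains the footnote, since for odd $k$ no such $z$ can exist. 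Therefore
\[ \Pr_G[\exists z \in \mathbb{Z}_p^{\alpha n} : M^T z = x] \;\le\; \Pr_G[\,S \text{ is a union of } k/2 \text{ edges of } G\,]. \]

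\emph{Step 2 (symmetrization and counting).} The event on the right depends only on the fixed $k$-set $S$ and on $G$. Since $S_n$ acts transitively on $\alpha$-partial matchings with all stabilizers of equal size, a uniformly random $\alpha$-partial matching has the same law as $\sigma(G_0)$, where $G_0$ is the canonical matching $\{1,2\}, \{3,4\}, \dots, \{2\alpha n - 1, 2\alpha n\}$ and $\sigma \in S_n$ is uniform. Now $S$ is a union of edges of $\sigma(G_0)$ iff $\sigma^{-1}(S)$ is a union of edges of $G_0$, and $T := \sigma^{-1}(S)$ is a uniformly random $k$-subset of $[n]$. The $k$-subsets that are unions of $k/2$ edges of $G_0$ are exactly the vertex sets of $(k/2)$-subsets of the $\alpha n$ edges of $G_0$, so there are $\binom{\alpha n}{k/2}$ of them among the $\binom{n}{k}$ total. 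Hence the probability equals $\binom{\alpha n}{k/2}\big/\binom{n}{k}$, giving the claimed bound.

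\emph{On the difficulty.} There is no serious obstacle here; the lemma is essentially a bookkeeping statement. The one point that deserves care is the symmetrization in Step 2 — verifying that ``fixed $x$, random $G$'' may be replaced by ``fixed $G_0$, random support $T$'' — together with noting that the reduction to ``$S$ is a union of $k/2$ edges'' only discards the (unneeded) converse direction, so we genuinely obtain an upper bound rather than an equality. Everything else is elementary.
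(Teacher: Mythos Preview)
Your proof is correct. Step~1 is precisely the structural observation the paper uses: a solution $z$ can only exist if the support $S$ of $x$ is saturated by $k/2$ edges of $G$. Where you diverge is in Step~2. The paper computes the probability directly by counting matchings: it writes down the total number of $\alpha$-partial matchings on $n$ vertices, then multiplies the number of perfect matchings on the $k$ vertices of $S$ by the number of $(\alpha n - k/2)$-matchings on the remaining $n-k$ vertices, and takes the ratio. Your symmetrization trick --- replacing a random $G$ and fixed $S$ by a fixed canonical $G_0$ and a uniformly random $k$-set $T = \sigma^{-1}(S)$ --- sidesteps all of those factorial expressions and reduces the count to choosing $k/2$ of the $\alpha n$ edges of $G_0$. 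Both arguments yield the exact value $\binom{\alpha n}{k/2}\big/\binom{n}{k}$ for the event in question, so the inequality in the lemma is in fact an equality for that event; your route is cleaner and less error-prone, while the paper's is a more mechanical brute-force count.
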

\begin{proof}
We know that the total number of all $\alpha$-partial matchings of $n$ vertices is $n! / (2^{\alpha n} (\alpha n)! (n-2\alpha n)!)$. And if there exists some $z$ such that $M^T z = x$, then $G$ must have exactly $k/2$ edges between those vertices $v$ with $x_v \ne 0$. There are $k! / (2^{k/2}(k/2)!)$ number of ways to choose those edges. Also, we need to choose $\alpha n - k/2$ edges amongst those $v$ whose $x_v = 0$, which we have $(n-k)! / (2^{n - k} (\alpha n - k/2)! (n-2\alpha n)!)$ ways to do. Combining them together leads to the lemma. 
\end{proof}

From the lemmas above, we can derive an important result in our proof.
\begin{lemma}\label{mainphm}
Let $A \subseteq \mathbb{Z}_p^n$ be of size at least $p^n/2^c$ for some $c \ge 1$, $G$ be a uniformly random $\alpha$-partial matching for some $0 < \alpha \le 1/4$ and $M$ be its incidence matrix. There exists a constant $\gamma$ independent of $n, c$ and $\alpha$, such that for all $\epsilon > 0$, if $c \le \gamma \epsilon\sqrt{n/\alpha}$ then
$$E_M[\norm{p_M-U}_{tvd}] \le \epsilon ,$$
where $p_M(w) = |\{x \in A \mid Mx = w\} / |A| $ is the distribution of $w$ in the \textbf{YES} case when $x$ is uniformly random in $A$.
\end{lemma}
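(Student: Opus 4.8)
The plan is to follow the Fourier-analytic route of \cite{gavinsky2007exponential}, now over the group $\mathbb{Z}_p$: dominate the total variation distance by an $L_2$ distance, expand $p_M-U$ in the Fourier basis of $\mathbb{Z}_p^{\alpha n}$, and then average over the random matching $M$ using the matching-counting lemma above together with the level inequality of Lemma~\ref{lem:fourier-level}. Write $f=\mathbf{1}_A$ and let $U$ be uniform on $\mathbb{Z}_p^{\alpha n}$. Fixing $M$ and inserting $\mathbf{1}[Mx=w]=p^{-\alpha n}\sum_{y\in\mathbb{Z}_p^{\alpha n}}\omega^{y\cdot(Mx-w)}$, the $y=0$ term is exactly $U(w)$, so
$$p_M(w)-U(w)=\frac{p^n}{|A|\,p^{\alpha n}}\sum_{y\neq 0}\hat{f}(-M^Ty)\,\omega^{-y\cdot w}.$$
Crucially, since $G$ is a matching the linear map $y\mapsto M^Ty$ over $\mathbb{Z}_p$ is injective --- each coordinate $y_e$ reappears at both endpoints of the edge $e$, so $M^Ty=0$ forces $y=0$ --- and this needs no primality of $p$. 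Hence Parseval over $\mathbb{Z}_p^{\alpha n}$ and Cauchy--Schwarz (the support of $p_M-U$ has size $\le p^{\alpha n}$) give
$$\norm{p_M-U}_{tvd}=\tfrac12\norm{p_M-U}_1\le\frac{p^n}{2|A|}\Bigl(\sum_{y\neq 0}|\hat{f}(M^Ty)|^2\Bigr)^{1/2}.$$

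Next I would average over $M$, apply Jensen's inequality to the concave square root, use injectivity of $M^T$, and invoke the matching-counting lemma (only even $|z|$ can lie in the image of $M^T$); the task reduces to showing that
$$\frac{p^{2n}}{|A|^2}\sum_{z\neq 0}|\hat{f}(z)|^2\,\Pr_M[\exists y:M^Ty=z]\;\le\;\sum_{\substack{k\ge 2\\ k\text{ even}}}\frac{\binom{\alpha n}{k/2}}{\binom{n}{k}}\cdot\frac{p^{2n}}{|A|^2}\sum_{|z|=k}|\hat{f}(z)|^2$$
is at most a $p$-dependent constant times $\gamma^2\epsilon^2$, since then $E_M[\norm{p_M-U}_{tvd}]\le\tfrac12(C_p\gamma^2\epsilon^2+\text{(tail)})^{1/2}\le\epsilon$ for $\gamma=\gamma(p)$ small. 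I split the level sum at $k=4c$. For $2\le k\le 4c$, Lemma~\ref{lem:fourier-level} bounds $\tfrac{p^{2n}}{|A|^2}\sum_{|z|=k}|\hat{f}(z)|^2\le(4\sqrt2\,pc/k)^k$, while $\binom{\alpha n}{k/2}/\binom nk\le(2e\alpha k/n)^{k/2}$ by an elementary estimate; multiplying and using $k\ge 2$, the $k$-th term is at most $\bigl(4p\sqrt{2e}\cdot c\sqrt{\alpha/n}\bigr)^k\le(4p\sqrt{2e}\,\gamma\epsilon)^k$, because the hypothesis $c\le\gamma\epsilon\sqrt{n/\alpha}$ says exactly $c\sqrt{\alpha/n}\le\gamma\epsilon$. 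Choosing $\gamma$ so this base is $\le\tfrac12$ makes the geometric series sum to $\le C_p\gamma^2\epsilon^2$.

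For the tail $k>4c$, where Lemma~\ref{lem:fourier-level} no longer applies, I would use the crude bounds $\tfrac{p^{2n}}{|A|^2}\sum_{|z|=k}|\hat{f}(z)|^2\le p^n/|A|\le 2^c$ (per level, by Parseval) and $\sum_k(\cdot)\le 2^c$ (total), together with the fact that $g(k):=(2e\alpha k/n)^{k/2}$ is a ``valley'' (decreasing then increasing), so $\max_{4c<k\le 2\alpha n}g(k)=\max(g(4c),g(2\alpha n))$. The hypothesis gives $\alpha c/n\le\gamma^2\epsilon^2/c$, hence $g(4c)\cdot 2^c\le(8e\gamma^2\epsilon^2)^{2c}\,2^c\le\epsilon^4$ for $\gamma$ small and $c\ge 1,\ \epsilon\le 1$; and $g(2\alpha n)\cdot 2^c\le(4e\alpha^2)^{\alpha n}2^c\le(e/4)^{\alpha n}2^c$ (here $\alpha\le 1/4$ enters), which is exponentially small once $n$ is large --- a nonempty window $4c<2\alpha n$ forces $\alpha n$ to be large whenever $\alpha$ is bounded below, while for $\alpha<1/(2e)$ the function $g$ is monotone on the whole window so that endpoint is dominated by $g(4c)$ anyway. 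Thus the tail is $O(\epsilon^4)$, negligible next to the main term, and one may assume $\epsilon\le 1$ (the statement being vacuous otherwise) and fix $\gamma=\gamma(p)$ as the minimum of the finitely many constraints above.

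I expect the tail $k>4c$ to be the main obstacle: one must make sure that the factor $2^c$ lost in passing through Parseval is beaten by the decay of the matching-collision probability $\binom{\alpha n}{k/2}/\binom nk$. This is precisely where both hypotheses are used --- $\alpha\le 1/4$ to keep $g(k)<1$ throughout the relevant range, and $c\le\gamma\epsilon\sqrt{n/\alpha}$ both to force $\alpha c/n$ small and to guarantee $4c<n/(2e^2\alpha)$ so that the geometric-type decay past level $4c$ has already set in (valid for $n$ past a constant depending on $p,\epsilon$). Everything else --- the Fourier expansion, Parseval, Jensen, and the two previously established lemmas --- is routine bookkeeping.
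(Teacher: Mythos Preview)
Your proposal is correct and follows essentially the same route as the paper: Fourier expansion of $p_M$, Cauchy--Schwarz/Parseval to pass to $L_2$, injectivity of $y\mapsto M^Ty$, the matching-counting lemma, the split at $k=4c$, and Lemma~\ref{lem:fourier-level} for the low levels. The only substantive difference is your handling of the tail $k>4c$. You replace $\binom{\alpha n}{k/2}/\binom nk$ by the upper bound $g(k)=(2e\alpha k/n)^{k/2}$ and then have to analyze the valley shape of $g$, split on $\alpha\lessgtr 1/(2e)$, and appeal to a ``large $n$'' regime at the right endpoint. This can be made rigorous (the hypothesis $c\ge 1$ together with $c\le\gamma\epsilon\sqrt{n/\alpha}$ forces $n\ge \alpha/(\gamma\epsilon)^2$, which is enough once $\gamma$ is small), but it is more work than needed. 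The paper instead observes that the \emph{exact} ratio $\binom{\alpha n}{k/2}/\binom nk$ is already monotone decreasing in even $k\le 2\alpha n$ when $\alpha\le 1/4$; so one simply pulls out its value at $k=4c$ and applies Parseval in total to get the tail bounded by $2^c\binom{\alpha n}{2c}/\binom n{4c}\le(8\sqrt2\,e\,\gamma\epsilon\sqrt{\alpha/n})^{2c}$, with no case analysis and no ``$n$ large'' caveat.
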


\begin{proof}
To show that $E_M[\norm{p_M-U}_{tvd}] \le \epsilon$, we can start by bounding the Fourier coefficients of $p_M$. In fact they are closely related to $\hat{f}$ (where recall that $f$ is the indicator function for membership in the set $A$):
\begin{align*}
    \widehat{p_M}(z) &= \frac{1}{p^{\alpha n}} \sum_{w \in \mathbb{Z}_p^{\alpha n}} p_M(z) \omega^{-w \cdot z} \nonumber \\
    &= \frac{1}{|A|p^{\alpha n}} \sum_{k=0}^{p-1} \omega^{-k} |\{x \in A | (Mx) \cdot z = k \}| \nonumber \\
    &= \frac{1}{|A|p^{\alpha n}} \sum_{k=0}^{p-1} \omega^{-k} |\{x \in A | x \cdot (M^Tz) = k \}| \nonumber \\
    &= \frac{1}{|A|p^{\alpha n}}\sum_{x \in A} \omega^{-x \cdot (M^T z)} \nonumber \\
    &= \frac{p^n}{|A|p^{\alpha n}} \widehat{f}(M^Tz)
\end{align*}
\noindent
From the bound of Fourier coefficients, we can give a bound on squared total variation distance
\begin{align*}
    E_M[\norm{p_M-U}_{tvd}^2] &\le p^{2 \alpha n} E_M[\norm{p_M-U}_2^2] \\
    &= p^{2\alpha n} E_M \left[ \sum_{z \in \mathbb{Z}_p^{\alpha n}\backslash \{0^{\alpha n}\}} |\widehat{p_M}(z)|^2 \right]\\
    &= \frac{p^{2 n}}{|A|^2} E_M \left[ \sum_{z \in \mathbb{Z}_p^{\alpha n}\backslash \{0^{\alpha n}\}} |\widehat{f}(M^T z)|^2 \right]
    \intertext{by Cauchy-Schwarz inequality, Parseval equality and the bound above. Since there is at most one $z \in \mathbb{Z}_p^{\alpha n}$ such that $x = M^T z$ for given $x$, we have}
    &= \frac{p^{2 n}}{|A|^2} E_M \left[ \sum_{x \in \mathbb{Z}_p^n\backslash \{0^{n}\}} |\widehat{f}(x)|^2 |\{z\in \mathbb{Z}_p^{\alpha n}| x=M^Tz\}| \right] \\
    &= \frac{p^{2 n}}{|A|^2}\sum_{x \in \mathbb{Z}_p^n\backslash \{0^{n}\}} \Pr_M[ \exists z\in \mathbb{Z}_p^{\alpha n} \text{s.t. }M^Tz = x] |\hat{f}(x)|^2 \\
    &\le \frac{p^{2 n}}{|A|^2} \sum_{k=2, k \text{even}}^{2\alpha n} \frac{\binom{\alpha n}{k/2}}{ \binom{n}{k}} \sum_{|x|=k} |\hat{f}(x)|^2.
\end{align*}
We then split the sum into two parts $k < 4c$ and $k \ge 4c$. For $k < 4c$, using $(n/k)^k \le \binom{n}{k} \le (en/k)^k$, we have
\begin{align*}
   \frac{p^{2 n}}{|A|^2}\sum_{k=2, k \text{even}}^{4c-2} \frac{\binom{\alpha n}{k/2}}{ \binom{n}{k}} \sum_{|x|=k} |\hat{f}(x)|^2
   &\le \sum_{k=2, k \text{even}}^{4c-2} \frac{(2e\alpha n / k)^{k/2}}{(n/k)^k}\left(\frac{4\sqrt{2}p c}{k}\right)^k \quad \text{(using Lemma~\ref{lem:fourier-level})} \\
   &\le \sum_{k=2, k \text{even}}^{4c-2} \left(\frac{64e\gamma^2\epsilon^2 p^2}{k}\right)^{k/2}, 
\end{align*}
which is at most $\epsilon^2/2$ when $\gamma$ is sufficiently small. For $k \ge 4c$ note that $\sum_{x} |\hat{f}(x)|^2 = |A|/p^n$ by Parseval and $\binom{\alpha n}{k/2}\big/ \binom{n}{k}$ is decreasing for even $k \le 2\alpha n$, we have
\begin{align*}
\frac{p^{2 n}}{|A|^2}\sum_{k=4c, k \text{even}}^{2\alpha n} \frac{\binom{\alpha n}{k/2}}{ \binom{n}{k}} \sum_{|x|=k} |\hat{f}(x)|^2 & \le 2^c\frac{\binom{\alpha n}{2c}}{ \binom{n}{4c}} \\
& \le 2^c \left( \frac{8c\alpha e}{n} \right)^{2c} \\
& \le \left(8\sqrt{2} e \gamma \epsilon \sqrt{\alpha/n}\right)^{2c}\le \epsilon^2/2.
\end{align*}
The last inequality holds because $n \ge 1$ and $c \ge 1$, and we let $\gamma$ be a sufficiently small constant.
%\todo[inline]{Justify the last step in the chain of inequalities above.}
Thus, in total we have $E_M[\norm{p_M-U}_{tvd}^2] \le \epsilon^2$, which means by Jensen $E_M[\norm{p_M-U}_{tvd}]\le \epsilon$.
\end{proof}

From the lemma above, we can prove the communication lower bound of $p$-ary Hidden Matching problem.
\begin{proof}[Proof of Theorem \ref{phm}]
By fixing the randomness of the protocol, we can assume without loss of generality that the protocol is deterministic . Fix $\epsilon > 0$ to a small constant and let $c = \gamma \epsilon \sqrt{n/\alpha}$. Consider any protocol that communicates at most $C = c - \log(1/\epsilon)$ bits. In the protocol, Alice's message gives an partition of $\mathbb{Z}_p^n$ into $2^C$ subsets. We call the sets with size $\epsilon p^n/2^C = p^n/2^c$ be ``large sets'', then for a uniformly random $x \in \mathbb{Z}_p^n$, with probability $1-\epsilon$, $x$ belongs to a large set. When $x$ is in a large set, by Lemma \ref{mainphm}, Bob can get an advantage of at most $\epsilon$. Together with the advantage from small sets, the overall advantage Bob can get is at most $O(\epsilon)$, which completes the proof.
\end{proof}

\section{Reduction to Streaming Algorithm for Unique Games}
In this section, we will prove Theorem~\ref{main}. Towards this end, we will describe a pair of distributions, \textbf{Y} and \textbf{N}, where \textbf{Y} is supported on satisfiable instances of Unique Games, and \textbf{N} is supported with high probability on instances where at most $\approx 1/p$ fraction of constraints can be satisfied. We will then establish, via reduction from the $p$-ary Hidden Matching communication problem, that any low-space streaming algorithm cannot distinguish between these distributions, thus establishing Theorem~\ref{main}.

\subsection{Input distributions}
We construct the above-mentioned distributions in a ``multi-stage'' way (using $k$ stages) based on the \textbf{YES} and \textbf{NO} distributions (defined at the beginning of Section~\ref{sec:p-ary-HM}) for $p$-ary Hidden Matching.  First we independently sample $k$ $\alpha$-partial matchings on $n$ vertices a 
The Unique Games instance graph $G$ will be the union of these matchings. It will thus have $n$ vertices and $k\alpha n$ edges (we allow multiple edges should they be sampled). We next specify the Unique Games constraints, which will be two-variable linear equations, one for each edge.
\begin{itemize}
    \item In the \textbf{Y} distribution, we sample a random $z \in \mathbb{Z}_p^n$ uniformly. We let the constraint on edge $(u,v)$ of $G$ be $x_u + x_v = z_u + z_v$.
\item In the \textbf{N} distribution, for each edge $(u,v)$ of $G$, we let the constraint be $x_u + x_v = q$ for a random $q \in \mathbb{Z}_p$, independently chosen for each edge. 
\end{itemize}

%Then, we want to show that there is a $1/p$ ratio between the best solution in the two cases. 

For instances sampled in the \textbf{Y} distribution, the best solution is obviously $x_u = z_u$ for all $u \in [n]$, which satisfies all the constraints. For the \textbf{N} distribution, we can use Chernoff bounds to upper bound the value of the optimal solution.

\begin{lemma}\label{sol}
Let $0 < \epsilon < 1$. If $k = Cp \log p / (\alpha \epsilon^2)$ for some large constant $C > 0$, then for a Unique Games instance sampled from the $\textbf{N}$ distribution, the optimal fraction of satisfiable constraints is at most $(1+\epsilon)/p$ with high probability.
\end{lemma}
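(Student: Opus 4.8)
The plan is to bound, via a union bound over all $p^n$ assignments, the probability that any fixed assignment $x\in\mathbb{Z}_p^n$ satisfies more than a $(1+\epsilon)/p$ fraction of the $k\alpha n$ constraints, and then show this probability is so small that it survives the union bound. First I would fix an assignment $x$ and examine a single edge $(u,v)$ of the graph $G$. In the \textbf{N} distribution the constraint on that edge is $x_u+x_v = q$ for an independently chosen uniform $q\in\mathbb{Z}_p$, so regardless of the value $x_u+x_v$, the probability that $x$ satisfies this constraint is exactly $1/p$. Moreover, since the $q$'s are chosen independently across edges, the events ``$x$ satisfies edge $e$'' are mutually independent over the $m := k\alpha n$ edges (we can condition on the graph $G$ first; the bound will hold for every realization of $G$, hence unconditionally). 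Thus the number of constraints satisfied by $x$ is a sum of $m$ i.i.d.\ Bernoulli$(1/p)$ random variables, with mean $m/p$.

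Next I would apply a multiplicative Chernoff bound: the probability that this binomial exceeds $(1+\epsilon)m/p$ is at most $\exp(-\epsilon^2 m/(3p))$ (using the standard form $\Pr[X \ge (1+\epsilon)\mu]\le \exp(-\epsilon^2\mu/3)$ for $\epsilon\le 1$, with $\mu = m/p$). Now take a union bound over all $p^n$ choices of $x$: the probability that \emph{some} assignment beats the $(1+\epsilon)/p$ threshold is at most
$$p^n \exp\!\left(-\frac{\epsilon^2 m}{3p}\right) = \exp\!\left(n\ln p - \frac{\epsilon^2 k\alpha n}{3p}\right).$$
Plugging in $k = Cp\log p/(\alpha\epsilon^2)$ gives exponent $n\ln p - (C/3)\,n\log p$, which for a sufficiently large absolute constant $C$ (say $C \ge 6\ln 2$, so that $(C/3)\log p \ge 2\ln p$) is at most $-n\ln p \le -\ln 2$ for $n\ge 1$, and in fact tends to $-\infty$ as $n$ grows. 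Hence with probability $1 - p^{-\Omega(n)}$ — certainly ``with high probability'' — every assignment satisfies at most a $(1+\epsilon)/p$ fraction of constraints, which is exactly the claim. (One should take a tiny bit of care about whether $\log$ denotes $\log_2$ or $\ln$, but this only affects the constant $C$, and the statement already allows $C$ to be ``some large constant.'')

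The only mild subtlety — and the closest thing to an obstacle — is making sure the independence structure is set up correctly: the constraint values $q$ must be drawn independently of the graph and of each other, which is exactly how the \textbf{N} distribution is defined, so conditioning on $G$ leaves us with genuinely independent coin flips per edge. There is also the trivial observation that multiple parallel edges, if sampled, are still handled correctly since each carries its own independent $q$. Everything else is a routine Chernoff-plus-union-bound computation, so I would present it compactly and note that the failure probability is in fact exponentially small in $n$.
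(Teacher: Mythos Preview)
Your proof is correct and follows the same high-level strategy as the paper (Chernoff plus a union bound over all $p^n$ assignments), but your execution is actually cleaner. The paper does \emph{not} condition on the graph $G$; instead it defines indicators $X^{(\ell)}_{ij}$ for the joint event ``edge $(i,j)$ appears in the $\ell$-th matching and is satisfied by $x$,'' argues these are negatively correlated (because matching edges within a stage are), and then invokes the Chernoff bound for negatively correlated random variables \cite{panconesi1997randomized}. Your observation that one can first condition on the realization of $G$ --- after which the $q$'s are genuinely i.i.d.\ Bernoulli$(1/p)$ across the $k\alpha n$ edges --- sidesteps the negative-correlation machinery entirely and lets the standard Chernoff bound do the work. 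Both routes yield the identical tail bound $\exp(-\epsilon^2 k\alpha n/(3p))$ and the same union-bound conclusion; yours just gets there with one fewer lemma.
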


Before we proceed to the proof, we first state the Chernoff bound for negatively correlated random variables.

\begin{lemma}[\cite{panconesi1997randomized}]
Let $X_1, \ldots, X_n$ be negatively correlated Bernoulli random variables and $X = X_1 + \cdots + X_n$. Then we have
$$\Pr[X \ge (1+\epsilon)E[X]] \le \exp (-E[X]\epsilon^2/3).$$
\end{lemma}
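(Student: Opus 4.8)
The plan is to follow the classical moment-generating-function (Chernoff) argument, the only new ingredient being that negative correlation suffices to dominate the MGF of the sum by the product of the individual MGFs, after which the derivation is identical to the independent case. I take the working definition of negative correlation (in the sense of~\cite{panconesi1997randomized}) to be: for every $S \subseteq [n]$, $E[\prod_{i\in S} X_i] \le \prod_{i\in S} E[X_i]$. Write $p_i = E[X_i]$ and $\mu = E[X] = \sum_{i=1}^n p_i$, and fix a parameter $t > 0$ to be chosen later. By Markov's inequality applied to $e^{tX}$, we have $\Pr[X \ge (1+\epsilon)\mu] \le e^{-t(1+\epsilon)\mu}\, E[e^{tX}]$, so it suffices to control $E[e^{tX}]$.

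The crucial step is the bound $E[e^{tX}] \le \prod_{i=1}^n E[e^{tX_i}]$. Because each $X_i$ is $\{0,1\}$-valued we have the exact identity $e^{tX_i} = 1 + (e^t-1)X_i$, hence by multilinearity $\prod_{i=1}^n e^{tX_i} = \sum_{S\subseteq[n]} (e^t-1)^{|S|} \prod_{i\in S} X_i$. Taking expectations, using $(e^t-1)^{|S|} \ge 0$ (valid since $t>0$) and the negative-correlation inequality term by term, we get $E[\prod_i e^{tX_i}] \le \sum_{S} (e^t-1)^{|S|}\prod_{i\in S} p_i = \prod_{i=1}^n \bigl(1 + (e^t-1)p_i\bigr) = \prod_{i=1}^n E[e^{tX_i}]$.

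The rest is the standard computation. Using $1 + (e^t-1)p_i \le \exp((e^t-1)p_i)$ gives $E[e^{tX}] \le \exp((e^t-1)\mu)$, so $\Pr[X \ge (1+\epsilon)\mu] \le \exp\bigl(-t(1+\epsilon)\mu + (e^t-1)\mu\bigr)$. Choosing $t = \ln(1+\epsilon)$ to minimize the exponent yields $\Pr[X\ge(1+\epsilon)\mu] \le \bigl(e^\epsilon/(1+\epsilon)^{1+\epsilon}\bigr)^{\mu}$, and the elementary inequality $e^\epsilon/(1+\epsilon)^{1+\epsilon} \le e^{-\epsilon^2/3}$ (which holds for $0 \le \epsilon \le 1$, the regime in which the lemma is invoked) gives the claimed bound $\exp(-\mu\epsilon^2/3)$.

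The only genuinely non-routine point --- and the sole place the hypothesis enters --- is the MGF domination: it works because expanding $\prod_i e^{tX_i}$ produces a polynomial in the $X_i$ with nonnegative coefficients, so each monomial's expectation can be replaced by the product of the marginals. Pairwise negative correlation alone would not be enough, which is exactly why the all-subsets formulation is the right notion here. After this step nothing about the joint distribution of the $X_i$ is used again, so the remainder mirrors the proof of the usual Chernoff upper-tail bound verbatim.
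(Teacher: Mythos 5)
Your proof is correct: the paper states this lemma purely as a citation to \cite{panconesi1997randomized} and gives no proof, and your argument is the standard one from that reference --- the all-subsets negative-correlation inequality (which is exactly what the paper's own verification in the proof of Lemma 4.1 establishes, via the conditional-expectation bound and induction) yields the MGF domination $E[e^{tX}] \le \prod_i E[e^{tX_i}]$, after which the usual Chernoff upper-tail computation applies verbatim. Your restriction to $0 \le \epsilon \le 1$ for the final elementary inequality matches the regime in which the lemma is invoked.
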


\begin{proof}[Proof of Lemma \ref{sol}]
Fix an assignment $x\in \mathbb{Z}_p^n$. For $1 \le \ell \le k$, let $X^{(\ell)}_{ij}$ be the indicator of the following event: ``in the $\ell$-th stage, the edge $(i,j)$ is included in the $\alpha$-partial matching and is satisfied by the assignment $x$.'' Then, $S =\sum_{\ell,i,j} X^{(\ell)}_{ij}$, summed over $1 \le \ell k$, and $1 \le i < j \le n$, is the random variable counting the number of constraints satisfied by the assignment $x$. Note that $\mathbb{E}[S] = k\alpha n/p$ is the expected number of constraints by the assignment $x$. And we know that each $X^{(\ell)}_{ij}$ is a Bernoulli random variable with probability of equaling $1$ being $2\alpha n / (pn(n-1))$.

We first claim that these random variables are negatively correlated. In fact, edges in different stages are independent. For edges in the same stage $\ell$, consider that we know that random variables $X^{(\ell)}_{i_1 j_1}, X^{(\ell)}_{i_2 j_2}, \ldots, X^{(\ell)}_{i_t j_t}$ have value $1$, and a vertex pair $(i_0, j_0)$. If $i_0$ or $j_0$ is occurred in some $i_s$ or $j_s$, then $X^{(\ell)}_{i_0j_0}$ must be $0$. Otherwise the conditional expectation of $X^{(\ell)}_{i_0j_0}$ is $2(\alpha n - t) / (p(n-t)(n-1-t))$, which is less than the unconditional expectation of $2\alpha n / (pn(n-1))$. In all cases we have $E[X^{(\ell)}_{i_0j_0} \mid ~ X^{(\ell)}_{i_1 j_1}= X^{(\ell)}_{i_2 j_2}= \cdots= X^{(\ell)}_{i_t j_t}=1] \le E[X^{(\ell)}_{i_0j_0}]$, which in turn means negative correlation.

Thus, by Chernoff bound for negatively random variables, we know that
$$\Pr[S \ge (1+\epsilon)k\alpha n / p] \le \exp (-\epsilon^2k\alpha n /3 p) = p^{-Cn/3} \le p^{-2n}. $$
The proof is now complete by a union bound over all $p^n$ candidate assignments.
\end{proof}

\subsection{Reduction from $p$-ary Hidden Matching}

Note that each stage of constraints in the Unique Games instance corresponds to the $p$-ary Hidden Matching problem, with the $\textbf{Y}$ distribution (resp. \textbf{N} distribution) coinciding with the YES distribution (NO distribution) of the Hidden Matching problem.
Using this, we can link the hardness of the two problems via a hybrid argument. Recall that we say that a decision algorithm distinguishes between two distributions $D_1$ and $D_2$ with advantage $\eta$ if it accepts samples from one distribution with probability at least $\eta$ more than those from the other distribution. 

\iffalse
The connection between $p$-ary Hidden Matching problem and unique games is as following: For a communication instance, we consider the following unique game constraints: $x_u + x_v = w_e$ for each edge $e=(u,v)$ in $G$. Then, these constraints is exact the same as the constraints of one stage in our unique game instance. We then make a hybrid argument to link the hardness of the two problems.
\fi
\begin{lemma}\label{red}
Suppose there exists a streaming algorithm \textbf{ALG} using $c$ bits of memory that can achieve advantage $1/4$ in distinguishing between instances from the \textbf{Y} and \textbf{N} distributions of Unique Games instances. Then there exists a protocol with $c$ bits of communication for the $p$-ary Hidden matching problem with advantage $\Omega(1/k)$ in distinguishing between \textbf{YES} and \textbf{NO} distributions.
\end{lemma}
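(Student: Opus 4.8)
\textbf{Proof proposal for Lemma~\ref{red}.}
The plan is to use a standard hybrid argument that embeds the single-stage $p$-ary Hidden Matching instance into one randomly chosen stage of the $k$-stage Unique Games stream, using the streaming algorithm \textbf{ALG} as a black box. Define $k+1$ hybrid distributions $H_0, H_1, \dots, H_k$ on Unique Games instances, where in $H_j$ the first $j$ stages are generated according to the \textbf{N}-type rule (each edge $(u,v)$ gets a fresh random right-hand side $q$) and the last $k-j$ stages are generated according to the \textbf{Y}-type rule (all edges use $x_u + x_v = z_u + z_v$ for one common random $z$). Then $H_0$ is exactly the \textbf{Y} distribution and $H_k$ is the \textbf{N} distribution. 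Since \textbf{ALG} distinguishes $H_0$ from $H_k$ with advantage $1/4$, by the triangle inequality there is some index $j \in \{1,\dots,k\}$ such that \textbf{ALG} distinguishes $H_{j-1}$ from $H_j$ with advantage at least $1/(4k)$; averaging over a uniformly random choice of $j$ achieves advantage $\Omega(1/k)$.

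Next I would describe how Alice and Bob simulate \textbf{ALG} on the hybrid pair $(H_{j-1}, H_j)$ using one $p$-ary Hidden Matching instance. The key point is that $H_{j-1}$ and $H_j$ differ only in stage $j$: in $H_{j-1}$ stage $j$ follows the \textbf{Y} rule and in $H_j$ it follows the \textbf{N} rule, which are precisely the YES and NO distributions of $p$-ary Hidden Matching (the common vector $z$ of the Unique Games \textbf{Y} distribution plays the role of Alice's hidden vector $x$). Alice, holding her $p$-ary Hidden Matching vector $x$, generates the last $k-j$ stages herself: she picks the $k-j$ random $\alpha$-partial matchings and writes the constraints $x_u + x_v = x_u + x_v$ wait — more carefully, she uses her vector $x$ as the common $z$ and produces, for stage $\ell > j$, the edges of a fresh random matching with right-hand sides $x_u + x_v$. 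She runs \textbf{ALG} on this suffix of the stream (she can process the stages in the order stage $k$, stage $k-1$, \dots, stage $j+1$, or rather in forward order — the point is the stream order is worst-case, so we are free to declare that stages $j+1,\dots,k$ arrive first, then stage $j$, then stages $1,\dots,j-1$; this reordering is exactly why the lemma is only claimed for worst-case arrival order). Alice sends the resulting memory state of \textbf{ALG} (at most $c$ bits) to Bob. Bob holds the matching $M$ and vector $w$ of the Hidden Matching instance; he feeds into \textbf{ALG} the stage-$j$ constraints $x_u + x_v = w_e$ for each edge $e = (u,v)$ of $M$, and then generates stages $1,\dots,j-1$ himself according to the \textbf{N} rule (fresh random right-hand sides), feeding them to \textbf{ALG}. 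Finally Bob reads off \textbf{ALG}'s output and answers Yes/No accordingly.

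When Bob's input $(M,w)$ comes from the YES distribution, $w = Mx$, so the stage-$j$ constraints are $x_u + x_v = x_u + x_v$ using Alice's $x$ as the common vector, and the whole simulated stream is distributed exactly as $H_{j-1}$ (all stages $\geq j$ share the common vector $x$, stages $< j$ are \textbf{N}-type). When $(M,w)$ comes from the NO distribution, $w$ is uniformly random and independent, so stage $j$ is \textbf{N}-type and the stream is distributed as $H_j$. Hence Bob's distinguishing advantage for $p$-ary Hidden Matching equals \textbf{ALG}'s advantage between $H_{j-1}$ and $H_j$, which is $\Omega(1/k)$ after averaging over $j$; the communication from Alice to Bob is the $c$-bit memory state, as required.

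The main obstacle — and the only genuinely delicate point — is the stream-order issue: the reduction requires stage $j$ to sit in the "middle" of the stream with the \textbf{Y}-type stages after it and \textbf{N}-type stages before it, which forces a particular (adversarial) arrival order of the constraints, and this is precisely why Theorem~\ref{main} is stated only for worst-case arrival. A secondary point to check carefully is that every distribution appearing in the simulation — the marginal of stages $\ell \neq j$ conditioned on the common vector, and the independence structure — matches the corresponding hybrid exactly, so that the YES/NO cases of Hidden Matching map cleanly onto $H_{j-1}/H_j$; this is routine once the hybrids are set up, since each stage's randomness (its matching and its right-hand sides) is independent of the others given the global vector.
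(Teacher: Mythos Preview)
Your hybrid approach is genuinely different from the paper's. The paper does not introduce $k{+}1$ mixed hybrids at all; instead it tracks $\|S^Y_j - S^N_j\|_{tvd}$, the total variation distance between \textbf{ALG}'s memory state after $j$ stages of the pure \textbf{Y} stream versus the pure \textbf{N} stream, and locates an \emph{informative index} $j^*$ where this quantity jumps by $\Omega(1/k)$. In the paper's protocol Alice feeds $j^*$ \textbf{Y}-type stages (using her $x$ as the common vector), Bob feeds only one more stage built from his Hidden Matching input, and then Bob \emph{stops} and applies a maximum-likelihood test on the resulting memory state --- he never simulates the remaining stages or reads \textbf{ALG}'s final output. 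The analysis then needs a data-processing inequality (Claim~6.5 of \cite{kapralov2015streaming}) to compare $f(S^Y_{j^*},C^N)$ with $f(S^N_{j^*},C^N)$. Your route, by contrast, would match each simulated stream exactly to one of the hybrids and simply read off \textbf{ALG}'s output, which is more elementary once it works.

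But as written it does not quite work. You locate the good index $j$ by the triangle inequality in the \emph{original} stage order, and only afterwards permute the stream so that stages $j{+}1,\dots,k$ arrive first. That permutation changes the distribution \textbf{ALG} actually sees: $H_{j-1}$ in the original order is ``$j{-}1$ \textbf{N}-batches followed by $k{-}j{+}1$ \textbf{Y}-batches'', whereas after your permutation it becomes ``$k{-}j{+}1$ \textbf{Y}-batches followed by $j{-}1$ \textbf{N}-batches'', and nothing guarantees \textbf{ALG}'s advantage on the adjacent pair survives this swap. Your appeal to ``worst-case arrival'' only buys you that the \emph{endpoints} \textbf{Y} and \textbf{N} are stage-exchangeable; it says nothing about the intermediate hybrids under a changed ordering. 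The fix is a one-line change: define $H_j$ with the first $j$ stages \textbf{Y}-type and the last $k{-}j$ stages \textbf{N}-type. Then Alice simulates stages $1,\dots,j{-}1$ as \textbf{Y}-type with $z=x$, sends the $c$-bit state, and Bob plugs in his Hidden Matching instance as stage $j$ and then generates stages $j{+}1,\dots,k$ as \textbf{N}-type --- all in the natural stream order, so the hybrid argument and the simulation live in the same ordering and no permutation is needed. (Incidentally, the paper's restriction to worst-case arrival comes from using matchings rather than Erd\H{o}s--R\'enyi graphs at each stage, not from any reordering trick in the reduction.)
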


We now prepare for the proof of Lemma~\ref{red}. Our proof follows along the lines of a similar argument in \cite{kapralov2015streaming}. In the execution of \textbf{ALG} on instances from the \textbf{Y} and \textbf{N} distributions, let the memory after receiving the $i$-th stage constraints be $S^Y_i$ and $S^N_i$ respectively. Thus $S^Y_i, S^N_i$ are random variables in $\{0,1\}^c$. Without loss of generality, we assume that $S^Y_0=S^N_0=0$.

We now define the notion of an informative index, as in \cite{kapralov2015streaming}.
\begin{definition}[Informative index]
An index $j \in \{0, \ldots, k-1\}$ is said to be $\delta$-informative for $\delta > 0$ if $$\norm{S^Y_{j+1}-S^N_{j+1}}_{tvd} \ge \norm{S^Y_{j}-S^N_{j}}_{tvd} + \delta$$
\end{definition}
We now show the existence of a $\Omega(1/k)$-informative index for any streaming algorithm that distinguishes between $\textbf{Y}$ and $\textbf{N}$ distributions. 
\begin{lemma}
Suppose a streaming algorithm \textbf{ALG} uses $c$ bits of memory and distinguishes the \textbf{Y} and \textbf{N} distributions with advantage $1/4$. Then the algorithm has a $\Omega(1/k)$-informative index. 
\end{lemma}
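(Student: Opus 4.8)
The plan is to argue by contradiction: if no index $j \in \{0,\dots,k-1\}$ is $\delta$-informative for $\delta = c_0/k$ (a suitable small absolute constant $c_0$), then the total variation distance between the memory states grows by less than $\delta$ at every stage, and we can bound the final gap by a telescoping sum. First I would recall that, by assumption on \textbf{ALG}, the algorithm's output distinguishes \textbf{Y} from \textbf{N} with advantage $1/4$; since the final output is a (possibly randomized) function of the final memory state, the data processing inequality for total variation distance gives $\norm{S^Y_k - S^N_k}_{tvd} \ge 1/4$.

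Next, since $S^Y_0 = S^N_0 = 0$ by our normalization, we have $\norm{S^Y_0 - S^N_0}_{tvd} = 0$. If no index $j$ were $\delta$-informative, then for every $j$ we would have $\norm{S^Y_{j+1} - S^N_{j+1}}_{tvd} < \norm{S^Y_{j} - S^N_{j}}_{tvd} + \delta$. Telescoping this from $j = 0$ to $j = k-1$ yields
$$\norm{S^Y_k - S^N_k}_{tvd} < \norm{S^Y_0 - S^N_0}_{tvd} + k\delta = k\delta.$$
Choosing $\delta = 1/(4k)$ (i.e., $c_0 = 1/4$), this gives $\norm{S^Y_k - S^N_k}_{tvd} < 1/4$, contradicting the lower bound from the previous paragraph. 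Hence some index $j$ is $\Omega(1/k)$-informative, as claimed.

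The only subtle point — and the one I would state carefully rather than skip — is the justification that the final TVD is at least the distinguishing advantage: one needs that the advantage of any decision procedure applied to a random variable is at most the TVD between the two distributions of that random variable, applied here with the random variable being $S^Y_k$ versus $S^N_k$ and the decision procedure being the (final) accept/reject rule of \textbf{ALG}. This is the standard fact that statistical distance equals the maximum distinguishing advantage over all (even randomized) tests, so it is not really an obstacle, just the one inequality that must be invoked. Everything else is the elementary telescoping/pigeonhole argument. There is also a mild bookkeeping issue that the definition of informative index is one-sided (only $S^Y_{j+1} - S^N_{j+1}$ being \emph{larger}), but since we only need one such index to exist and the telescoped sum of the increments must reach $1/4$, the one-sided version suffices and in fact is exactly what the argument produces.
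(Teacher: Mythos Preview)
Your proposal is correct and is essentially the same averaging/pigeonhole argument as the paper's proof. The only cosmetic difference is that the paper picks the first index $j$ where $\norm{S^Y_{j+1}-S^N_{j+1}}_{tvd} \ge C(j+1)/k$ rather than arguing by contradiction via telescoping, but this is the same idea; if anything, your version is cleaner in that you explicitly justify $\norm{S^Y_k - S^N_k}_{tvd} \ge 1/4$ via the data-processing inequality, whereas the paper just asserts it.
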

\begin{proof}
At first, $\norm{S^Y_{0}-S^N_{0}}_{tvd} = 0$; at the end of the algorithm, since advantage is at least $1/4$, $\norm{S^Y_{k}-S^N_{k}}_{tvd}$ must be at least some constant $C$. Let $j$ be the first index such that $\norm{S^Y_{j+1}-S^N_{j+1}}_{tvd} \ge C(j+1)/k$, then $j$ is a $C/k$-informative index.
\end{proof}

Let $j^*$ be a $\Omega(1/k)$-informative index of a streaming algorithm \textbf{ALG}. Using \textbf{ALG}, we can devise a communication protocol for the $p$-ary Hidden Matching problem as follows.

\begin{enumerate}
\itemsep=0.5ex
    \item Suppose Alice holds as input a random string $x \in \mathbb{Z}_p^n$. She samples $j^*$ random $\alpha$-partial matchings and feeds the streaming algorithm UG constraints for the first $j^*$ stages that follow the \textbf{Y} distribution with the setting $z = x$.
    
    \item Alice sends the memory contents of \textbf{ALG} after $j^*$ stages to Bob.
    
    \item Bob samples an $\alpha$-partial matching and gives constraints $x_u+x_v = w_e$ for $e=(u,v)$ according to his $w$. He then continues running \textbf{ALG} on these constraints as the $(j^*+1)$'th stage.
    
    \smallskip
    Let the memory Bob gets be $s$.
    
    \item 
        Let the resulting memory distribution under the two cases (depending on $w$'s distribution) be $\tilde{S}^{\text{YES}}$ and $\tilde{S}^{\text{NO}}$. (Note that these distribution can be computed by Bob since \textbf{ALG} is known.)

    \smallskip
    Bob outputs $1$ if $\Pr[\tilde{S}^{\text{YES}} = s] \ge \Pr[\tilde{S}^\text{NO} = s]$, and otherwise $0$.
\end{enumerate}

The above completes the description of the reduction. 
Before we analyze it and proceed to the proof of Lemma~\ref{red}, we need the following fact about the statistical (total variation) distance between random variabls.

\begin{lemma}[Claim 6.5, \cite{kapralov2015streaming}]\label{ftvd}
Let $X, Y$ be two random variables and $W$ be independent of $(X,Y)$. Then for any function $f$, we have
$$\norm{f(X,W) - f(Y,W)}_{tvd} \le \norm{X-Y}_{tvd} \ . $$
\end{lemma}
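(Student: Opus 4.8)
The statement to prove is Lemma~\ref{ftvd}: for random variables $X,Y$ and $W$ independent of $(X,Y)$, and any function $f$, $\norm{f(X,W)-f(Y,W)}_{tvd} \le \norm{X-Y}_{tvd}$.

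\textbf{Proof plan.}
The plan is to reduce to two basic, well-known facts about total variation distance: (i) applying a (possibly randomized, but here deterministic) function to a random variable cannot increase TVD, i.e.\ $\norm{g(X)-g(Y)}_{tvd}\le\norm{X-Y}_{tvd}$ for any $g$; and (ii) ``tensoring in'' an independent common random variable does not increase TVD, i.e.\ $\norm{(X,W)-(Y,W)}_{tvd}\le\norm{X-Y}_{tvd}$ when $W\perp(X,Y)$. Granting these, the lemma is immediate: write $f(X,W)=g(X,W)$ where $g=f$, and chain the two facts, $\norm{f(X,W)-f(Y,W)}_{tvd}\le\norm{(X,W)-(Y,W)}_{tvd}\le\norm{X-Y}_{tvd}$.

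For (ii), I would argue directly from the definition $\norm{A-B}_{tvd}=\sup_S|\Pr[A\in S]-\Pr[B\in S]|$, or equivalently via $\frac12\sum_t|\Pr[A=t]-\Pr[B=t]|$ in the discrete case (which suffices for this paper, since all objects live in finite sets). Since $W$ is independent of both $X$ and $Y$, for any value $w$ we have $\Pr[(X,W)=(x,w)] = \Pr[W=w]\Pr[X=x]$ and likewise for $Y$. Hence
\begin{align*}
\norm{(X,W)-(Y,W)}_{tvd} &= \tfrac12\sum_{x,w}\bigl|\Pr[W=w]\Pr[X=x]-\Pr[W=w]\Pr[Y=x]\bigr| \\
&= \tfrac12\sum_{w}\Pr[W=w]\sum_{x}\bigl|\Pr[X=x]-\Pr[Y=x]\bigr| \\
&= \Bigl(\sum_w \Pr[W=w]\Bigr)\norm{X-Y}_{tvd} = \norm{X-Y}_{tvd},
\end{align*}
so in fact equality holds here. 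For (i), the data-processing inequality for TVD, I would note that for any event $S$ in the range of $g$, the preimage $g^{-1}(S)$ is an event in the domain, so $\Pr[g(X)\in S]-\Pr[g(Y)\in S] = \Pr[X\in g^{-1}(S)]-\Pr[Y\in g^{-1}(S)] \le \norm{X-Y}_{tvd}$; taking the supremum over $S$ gives the claim. Combining, $\norm{f(X,W)-f(Y,W)}_{tvd}\le\norm{(X,W)-(Y,W)}_{tvd}=\norm{X-Y}_{tvd}$.

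There is essentially no obstacle here — the statement is a routine composition of two standard TVD inequalities, and the only thing to be slightly careful about is the role of independence, which is used exactly once (in step (ii)) to factor the joint distribution of $(X,W)$. One could alternatively cite this directly as Claim~6.5 of \cite{kapralov2015streaming} as the excerpt already does, but the self-contained two-line argument above is short enough to include in full.
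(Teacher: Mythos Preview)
Your proof is correct and self-contained: the two ingredients (data processing for TVD, and equality under tensoring with a common independent variable) are exactly what is needed, and your use of independence is in the right place. Note, however, that the paper itself does not prove this lemma at all---it is simply quoted as Claim~6.5 of \cite{kapralov2015streaming}---so there is no ``paper's own proof'' to compare against; your argument would serve as a fine inline justification if one wished to make the paper self-contained.
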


\smallskip
\begin{proof}[Proof of Lemma~\ref{red}]
We argue that the above protocol for $p$-ary Hidden Matching achieves the claimed advantage of $\Omega(1/k)$ in distinguishing between \textbf{YES} and \textbf{NO} distributions.

Let $f$ be the function that maps the memory after stage $j^*$ and constraints of stage $(j^*+1)$ to the memory after stage $(j^*+1)$. Thus we have $\tilde{S}^{\text{YES}}= S^Y_{j^*+1} = f(S^Y_j, C^Y)$ and $\tilde{S}^{\text{NO}} = f(S^Y_j, C^N)$, where $C^Y, C^N$ be the constraints Bob generated in both cases. We also know that $S^N_{j^*+1} = f(S^N_j, C^N)$.

By Lemma \ref{ftvd}, we know that 
$$\norm{\tilde{S}^{\text{NO}} - S^N_{j^*+1}}_{tvd} = \norm{f(S^Y_{j^*}, C^N)-f(S^N_{j^*}, C^N)}_{tvd} \le \norm{S^Y_{j^*}-S^N_{j^*}}_{tvd}. $$
Hence, we have
\begin{align*}
\norm{\tilde{S}^{\text{YES}}-\tilde{S}^{\text{NO}}}_{tvd} &\ge \norm{S^Y_{j^*+1}-S^N_{j^*+1}}_{tvd} -  \norm{\tilde{S}^{\text{NO}} - S^N_{j^*+1}}_{tvd}\\
&\ge \norm{S^Y_{j^*+1}-S^N_{j^*+1}}_{tvd} - \norm{S^Y_{j^*}-S^N_{j^*}}_{tvd} \\
&\ge \Omega(1/k).
\end{align*}
The strategy in Step 4 that Bob uses distinguishes between $\tilde{S}^{\text{YES}}$ and $\tilde{S}^{\text{NO}}$ with advantage exactly $\norm{\tilde{S}^{\text{YES}}-\tilde{S}^{\text{NO}}}_{tvd}$, which is at least $\Omega(1/k)$. This concludes the proof of Lemma~\ref{red}.
\end{proof}

\smallskip
Our main result, Theorem~\ref{main}, now follows by choosing $\alpha = 1/8$ and $k = \lceil C p \log p/\epsilon^2 \rceil$  for a large enough absolute constant $C$, and combining together Theorem~\ref{phm}, Lemma~\ref{red}, and Lemma~\ref{sol}.

\iffalse
Then, we can finally state our main proof:
\begin{proof}[Proof of Theorem \ref{main}]
Combining 
Let the algorithm be \textbf{ALG}. By Lemma \ref{sol}, there exists a constant $k_\epsilon$ such that \textbf{ALG} can distinguish $k_\epsilon$-stage distribution \textbf{D} with probability at least $3/4$. By Yao's minmax lemma, we can assume that it is deterministic with success probability $3/4$ on \textbf{D}.
By Lemma \ref{red}, we can use \textbf{ALG} to give a protocol with advantage $\Omega(1/k_\epsilon)$. By Lemma \ref{phm}, this requires $\Omega(\sqrt{n}/k_\epsilon)$ space, which completes the proof.
\end{proof}
\fi

\section{Conclusion}

We proved that Unique Games is hard for single-pass streaming algorithms in a strong sense: even if the instance is perfectly satisfiable, the algorithm cannot certify that it is even $(1/p+\epsilon)$-satisfiable, where $p$ is the alphabet size, and $\epsilon > 0$ is an arbitrary constant. Some natural directions to extend our lower bound would be to multi-pass algorithms, and for random arrival order of the constraints. 

An interesting direction for future work would be to establish limitations of streaming algorithms for other approximation problems which are only known to be ``Unique Games-hard.'' An example, which partly motivated this work initially, is the Maximum Acyclic Subgraph (MAS) problem. The MAS problem is another one of those notorious problems for which there is a trivial algorithm that achieves approximation ratio of $2$ (the algorithm is simply to order the vertices arbitrarily, and take either all the forward-going or backward-going edges as an acyclic subgraph with at least $1/2$ the edges), and no efficient algorithm achieving a factor $(2-\epsilon)$-approximation is known for any fixed $\epsilon > 0$. On the other hand, known NP-hardness results are rather weak, but under the Unique Games conjecture, it is known that there is no efficient $(2-\epsilon)$-approximation for MAS~\cite{GMR,GHMRC}.

One can try to explain the difficulty of MAS in the streaming model, by proving a result similar in spirit to the result we established for Unique Games. Specifically, given as input a directed graph whose edges arrive one-by-one, can a low-space single-pass streaming algorithm distinguish between the cases when the directed graph is acyclic and when it has no acyclic subgraph with even $1/2+\epsilon$ of the edges? (The $1/2$ threshold being trivial, since any directed graph has an acyclic subgraph with $1/2$ the edges.) A result of this flavor was shown with $1/2$ replaced by $7/8$ in \cite{GVV-approx17}. 

The reduction from Unique Games to $(2-\epsilon)$-approximating MAS~\cite{GMR} and our inapproximability result for UG in the streaming model gives hope to prove the desired streaming hardness for MAS as well, by implementing the reduction in a streaming manner. Since reductions involving CSPs are usually local, the arrival of one constraint of problem $\mathbb{A}$ can be mimicked by the arrival of the constraints of problem $\mathbb{B}$ that implement it. The reduction from UG to MAS (and indeed many other CSPs), however, introduces constraints between all pairs of variables that share a constraint with a UG vertex $u$. So to implement it one would need the UG streaming hardness under a ``vertex arrival'' model, where the graph is bipartite, and all constraints involving a left hand side vertex arrive in sequence. We \emph{can} adapt the reduction in \cite{GMR} to something local, based only on a single constraint, thereby making it more friendly to the edge arrival model. However, this only yields a weaker hardness result that distinguishing DAGs from graphs whose MAS has at most $\approx 3/4$ edges requires $\Omega(\sqrt{n})$ space. 

Obtaining a tight streaming hardness result for MAS, and more broadly leveraging our tight streaming hardness result for Unique Games toward streaming inapproximability results for other optimization problems for which we have optimal reductions from Unique Games, are interesting directions for future work. Further, given the hardness results in this work and \cite{kapralov2015streaming}, one can ask which CSPs and related problems admit non-trivial approximate estimation algorithms in the streaming model. Even though one might suspect that strong hardness results should be pervasive, it seems that it is rather non-trivial to establish strong limitations of streaming algorithms, and the algorithms for Max 2CSP in \cite{GVV-approx17} suggest that there might be more interesting cases where streaming algorithms can provide non-trivial guarantees.

\bibliography{main}

\begin{thebibliography}{10}

\bibitem{CMM}
Moses Charikar, Konstantin Makarychev, and Yury Makarychev.
\newblock Near-optimal algorithms for unique games.
\newblock In {\em Proceedings of the 38th Annual {ACM} Symposium on Theory of
  Computing}, pages 205--214, 2006.

\bibitem{DKKMS-stoc18}
Irit Dinur, Subhash Khot, Guy Kindler, Dor Minzer, and Muli Safra.
\newblock Towards a proof of the 2-to-1 games conjecture?
\newblock In {\em Proceedings of the 50th Annual {ACM} Symposium on Theory of
  Computing}, pages 376--389, 2018.

\bibitem{feige-reichman}
Uriel Feige and Daniel Reichman.
\newblock On systems of linear equations with two variables per equation.
\newblock In {\em Approximation, Randomization, and Combinatorial Optimization,
  Algorithms and Techniques (APPROX, RANDOM)}, pages 117--127, 2004.

\bibitem{gavinsky2007exponential}
Dmitry Gavinsky, Julia Kempe, Iordanis Kerenidis, Ran Raz, and Ronald De~Wolf.
\newblock Exponential separations for one-way quantum communication complexity,
  with applications to cryptography.
\newblock In {\em Proceedings of the 39th annual ACM symposium on Theory of
  computing}, pages 516--525. ACM, 2007.

\bibitem{GHMRC}
Venkatesan Guruswami, Johan H{\aa}stad, Rajsekar Manokaran, Prasad Raghavendra,
  and Moses Charikar.
\newblock Beating the random ordering is hard: Every ordering {CSP} is
  approximation resistant.
\newblock {\em SIAM Journal on Computing}, 40(3):878--914, 2011.

\bibitem{GMR}
Venkatesan Guruswami, Rajsekar Manokaran, and Prasad Raghavendra.
\newblock Beating the random ordering is hard: Inapproximability of maximum
  acyclic subgraph.
\newblock In {\em Proceedings of the 49th Annual {IEEE} Symposium on
  Foundations of Computer Science}, pages 573--582, 2008.

\bibitem{GVV-approx17}
Venkatesan Guruswami, Ameya Velingker, and Santhoshini Velusamy.
\newblock Streaming complexity of approximating max 2csp and max acyclic
  subgraph.
\newblock In {\em 20th International Workshop on Approximation Algorithms for
  Combinatorial Optimization Problems (APPROX)}, pages 8:1--8:19, 2017.

\bibitem{Kahn1988The}
Jeff Kahn, Gil Kalai, and Nathan Linial.
\newblock The influence of variables on boolean functions.
\newblock In {\em Proceedings of the 29th Annual Symposium on Foundations of
  Computer Science}, pages 68--80, 1988.

\bibitem{kapralov2015streaming}
Michael Kapralov, Sanjeev Khanna, and Madhu Sudan.
\newblock Streaming lower bounds for approximating max-cut.
\newblock In {\em Proceedings of the 26 annual ACM-SIAM symposium on Discrete
  Algorithms}, pages 1263--1282. Society for Industrial and Applied
  Mathematics, 2015.

\bibitem{Khot-UGC}
Subhash Khot.
\newblock On the power of unique 2-prover 1-round games.
\newblock In {\em Proceedings on 34th Annual {ACM} Symposium on Theory of
  Computing}, pages 767--775, 2002.

\bibitem{KKMO}
Subhash Khot, Guy Kindler, Elchanan Mossel, and Ryan O'Donnell.
\newblock Optimal inapproximability results for {MAX-CUT} and other 2-variable
  {CSP}s?
\newblock {\em SIAM Journal on Computing}, 37(1):319--357, 2007.

\bibitem{KMS-focs18}
Subhash Khot, Dor Minzer, and Muli Safra.
\newblock Pseudorandom sets in grassmann graph have near-perfect expansion.
\newblock {\em Electronic Colloquium on Computational Complexity {(ECCC)}},
  25:6, 2018.
\newblock To appear in FOCS'18.

\bibitem{o2014analysis}
Ryan O'Donnell.
\newblock {\em Analysis of boolean functions}.
\newblock Cambridge University Press, 2014.

\bibitem{panconesi1997randomized}
Alessandro Panconesi and Aravind Srinivasan.
\newblock Randomized distributed edge coloring via an extension of the
  chernoff--hoeffding bounds.
\newblock {\em SIAM Journal on Computing}, 26(2):350--368, 1997.

\bibitem{Verbin2011The}
Elad Verbin and Wei Yu.
\newblock The streaming complexity of cycle counting, sorting by reversals, and
  other problems.
\newblock In {\em Proceedings of the 26 annual ACM-SIAM symposium on Discrete
  Algorithms}, pages 11--25, 2011.

\end{thebibliography}

\end{document}